   \newtheorem{thm}{Theorem}[section]
   \newtheorem{lemma}[thm]{Lemma}
   \newtheorem{definition}{Definition}
\newcommand\defeq{\stackrel{def}{=}}
\numberwithin{equation}{section}
\author[1,2]{Yaron Oz}
\author[3]{Ittai Rubinstein}
\author[3]{Muli Safra}
\affil[1]{Raymond and Beverly Sackler School of Physics and Astronomy, Tel-Aviv University, Tel-Aviv 69978, Israel}
\affil[2]{School of Natural Sciences, Institute for Advanced Study, Princeton NJ, USA}
\affil[3]{Blavatnik School of Computer Science, Tel-Aviv University, Tel-Aviv 69978, Israel}
\date{\today}
\begin{document}

\title{Multivariate Generating Functions for Information Spread on Multi-Type Random Graphs}
\maketitle

\begin{abstract}

We study the spread of information on multi-type directed random graphs. In such graphs the vertices
are partitioned into distinct types (communities) that have different transmission rates between themselves and with other
types. 
We construct multivariate generating functions and use multi-type branching processes to derive 
an equation for the size of the large out-components in multi-type random graphs with a general class of degree distributions.
We use our methods to analyse the spread of epidemics and verify the results with population based
simulations.

\end{abstract}

\section{Introduction}

Random graphs \cite{Bol,Ran} provide a framework for studying the spread of information
in networks and
have been successfully applied to a wide range of dynamical systems such as 
social networks, epidemic spread and the internet \cite{New, review1, review2,web}.
In the simplest setup, one considers a complete graph where any two vertices are connected by a link
with a prescribed probability $p$, independently for each pair of vertices. 
We will refer to such a link between two vertices as "transmission", which means that some
information passed between the two vertices, and one often studies the fraction of the graph nodes to which information originating from some vertex is expected to reach.
In the language of epidemic spread, which we will be using, such nodes are called "infected".
Above a certain threshold value $p=p_c$ one has a giant connected component (GCC) or a large out-component
of infected nodes in the case of a directed graph. 

This type of graph (also known as Erd\"{o}s-Reyni graphs) is often generalized in one of two ways.
The first is by adding a more general degree distribution \cite{PRE_GCC, N2, corona1st, corona3rd, herd1, herd3, herd4, herd5, He}.
Since the existence of each edge in an Erd\"{o}s-Reyni graph is i.i.d, the in and out degrees of all vertices are distributed according to a Poisson distribution.
However, in many real-world applications the degrees of the vertices often exhibit a higher variance, resulting in different statistical behaviours of the resulting graphs.
Newman et al. \cite{PRE_GCC} presented a powerful mathematical tool using generating functions that allows one to analyse the spread of information on several simple graph structures with general degree distributions.
This tool works by analysing the generating function of the size of connected components and used to calculate the size of the greatest connected component (GCC) of the graph.
In \cite{N1} these tools were generalized to study percolations of several other families of graph.

Recently, the COVID-19 pandemic has spurred a lot of renewed interest in the study of the spread of diseases.
In particular, due to the heterogeneous spread of COVID-19, which is often characterized by super-spreading events and super-spreading individuals, a great deal of effort has gone into assessing the effects of heterogeneity in epidemiology.
Some of these papers focus on the early termination probability of an epidemic (the probability that the disease would disappear without intervention or herd immunity) when it is characterized by heterogeneous spread \cite{corona1st, althouse2020superspreading}.
Others focused on the effect of heterogeneity on the end of an epidemic when it is already widespread.
In \cite{herd1, herd3, herd4} the authors asked when we would reach herd immunity in the sense that the effective reproductive index $R_{\text{eff}}$ would be below $1$.
They use a variant of the SIR model to show that for negative binomial distributions, the percentage of the population infected before it reaches herd immunity can be characterized by a simple formula.

In \cite{He} the authors use a graph based model to estimate the total number of infected individuals until the epidemic is completely eliminated due to the herd immunity.
These models predict very different outcomes, due to the fact that at the point in time when $R_{\text{eff}} = 1$, there can still be many infected individuals causing an "after-burn" effect.
These models were compared and their limitations were explored in detail in \cite{corona3rd}.
Unlike most of the previously mentioned models, Nielsen et al. \cite{herd5} assume that heterogeneity is only in the infectiousness (and not in susceptibility).
In such a case, the heterogeneity is not expected to have a direct effect on the portion of the population reached by the disease, but can be used to model policies aimed at reducing the infectiousness of super-spreading individuals and super-spreading events.

In most of these examples, the graphs considered have a "single-type" structure, that is, all the vertices are equivalent.
This is characterized by the fact that while the in and out degrees of the vertices are not always the same, once these are selected, any outgoing edge from any vertex can be connected to any other vertex with the same probability (weighted by that vertex's in-degree).
In other words, for any vertex $v$ with in-degrees $i$, and for any vertex $u$ with out-degree $o$, the probability that the edge $e = (u, v)$ is in the graph is equal to $\Pr\left[e \in E\right] = cio$ for some constant $c$.
While being a good approximation in diverse cases, many real life networks are not complete graphs and are better approximated by a multi-type structure.
That is, the vertices are partitioned to different types (communities) that have a different interaction between themselves and with other types.

For instance, in \cite{wallinga2006using}, the authors use data on social interactions to estimate the rate at which members of different age groups are likely to infect one another with respiratory spread diseases.
Indeed, they found that age cohorts have a significant influence on the network of interactions.
For example, they found that most members of each age cohort are more likely to interact with members of their own cohort and that certain pairs of age cohorts interact more often than others.
Additionally, some age cohorts are in general more active than others, and as a result we would expect an epidemic to affect them with differing degrees.
Effectively modeling this phenomena could have a huge impact on our understanding of epidemics.

Multi-type graphs are most easily analysed when the in and out degrees are i.i.d Poisson distributions.
This model was proposed by Ball and Clancy \cite{original} and the size of the large out-components was derived.
In \cite{book}, this formula is presented with additional background (see Chapter 6).
In \cite{Minzer_2021} a simple condition for the existence of a giant connected connected component was proven.
However, these models do not take into account the heterogeneity which was also shown to have a significant impact on the epidemiological spread.
One possible combination of these models was considered by Britton et al. \cite{science}.
In this model one attributes to each individual a type $1 \leq i \leq r$ and an activity level $\alpha \in \mathbb{R}$, and the probability for an infection between any two individuals $u, v$ of types $i,j$ respectively is $\Pr\left[(u,v) \in E\right] \propto M_{i,j} \alpha(v) \alpha(u) $, where $M$ is the interaction matrix.

For the COVID-19 pandemic, it is commonly assumed that the activity levels are distributed according to a Gamma distribution with a small value of $k \approx 0.1$ \cite{herd1, herd3, herd4, herd5, He, corona1st, corona3rd}, and there is experimental evidence supporting this assumption. 
For instance, Lloyd-Smith et al. who estimates $k\approx 0.16$ from experimental data on SARS-1 outbreaks \cite{herd6}, and \cite{Adi} who showed that COVID-19 infections have a higher variance (and therefore have a lower $k$).
However, the methods used in \cite{science}, are only applicable to discrete distributions with a small support and as a result they cannot be easily used to model Poisson activity levels.
Instead, they consider a heuristic distribution where each type has three activity levels representing $25\%, 50\%, 25\%$ of the population with activity levels of $\frac{1}{2}, 1, 2$ respectively.

The aim of this paper is to adapt the tools developed in \cite{PRE_GCC} to the multi-type setting considered in \cite{Minzer_2021} by introducing a tool to model and calculate the size of the large out-components of a more general class of multi-type graphs, which includes Poisson graphs with an arbitrary activity level distribution.
This will allow us to use the activity level multi-type model introduced in \cite{science}, with activity levels distributed according to a Gamma distribution.
Our analysis will follow in the same general direction as the one in \cite{PRE_GCC}, adapting many of its intermediate results to the multi-type setting as well.

In our work we focus on directed graphs as they are more relevant for most of our use-cases.
For such graphs one needs to refine the definition of the GCC.
One possible approach, as taken in \cite{PRE_GCC}, is to look at greatest strongly connected component, i.e. the bow-tie model.
However, keeping for instance with the example of the spread of a disease, one is interested in the percentage of the population infected when the disease starts from a small fraction of the population (i.e. in the out-component of those initially infected).
We formalise the concept for general graphs and 
apply our framework in several settings, including a comparison of different activity levels in the multi-type setting considered in \cite{science}, and discuss its limitations.

The paper is organized as follows:
In Section 2, we define the main concepts of our framework and give an overview of the limitations and assumptions of our model.
In Section 3, we show how these can be combined to find the size of the large out-components.
In Section 4, we consider several examples of multi-type graphs and analyse them using the tools developed in Section 3.
In Section 5, we apply our results to modeling the spread of epidemics.
In Section 6, we discuss in detail the limitations of our framework.
Section 7 is dedicated to a discussion of the results and their implication.
In the appendices we outline in detail some technical aspects.

\section{The Model and Generating Functions}

In this section we lay the groundwork for the construction of the generating functions of multi-type directed random graphs.
Let $G = (V,E)$ be a random graph representing a population of $N=\left\vert V \right\vert$ individuals.
We partition $V$ into $r$ disjoint sets $V = V_1 \sqcup \cdots \sqcup V_r$ and for any vertex $u$ of type $i$ (i.e. $u\in V_i$) we define its $r$ dimensional in / out degree vectors to be

\begin{equation}
    \deg_{\text{in}, j} (u) = \left\vert \left\{ v\in V_j \mid (v, u) \in E \right\} \right\vert    \ ,
\end{equation}
and
\begin{equation}
   \deg_{\text{out}, j} (u) = \left\vert \left\{ v\in V_j \mid (u, v) \in E \right\} \right\vert  \ .
\end{equation}

We will say that $G$ is an $r$-type graph if it is created by a process, which, after assigning to each vertex a type (with proportions $\pi_i = \frac{n_i}{N}$) and in and out degree vectors, connects any two edges of the same type with the same probability.
Formally, we require that

\begin{equation}
    \forall i, j, u\in V_i, v\in V_j \;\;\;\; \Pr\left[ \left( u, v\right) \in E \right] = \frac{\deg_{\text{out}, j} (u) \deg_{\text{in}, i} (v)}{\sum\limits_{w \in V_j} \deg_{\text{in}, i} (w) }  \ .
\end{equation}

Furthermore, for all $i$, we will denote by $\left(I_{i,1}, \ldots, I_{i, r}\right)$, $\left(O_{i,1}, \ldots, O_{i, r}\right)$ the random variables corresponding to the in and out degree vectors of a random vertex of type $i$.
In other words:

\begin{equation}
    \begin{aligned}
        \forall \Vec{i}, \Vec{o} \in \mathbb{N} ^ r \;\;\;\; &\Pr\left[ \left(I_{i,1}, \ldots, I_{i, r}\right) = \Vec{i} \wedge \left(O_{i,1}, \ldots, O_{i, r}\right) = \Vec{o} \right] = \\
        &\Pr\left[ \left( \left( \deg_{\text{in}, 1} (u), \ldots, \deg_{\text{in}, r} (u) \right) = \Vec{i} \right) \wedge \left( \left( \deg_{\text{out}, 1} (u), \ldots, \deg_{\text{out}, r} (u) \right) = \Vec{o} \right) \mid  u \in V_i \right]
    \end{aligned} 
\end{equation}

Throughout this manuscript, we will assume that the second moments of $I_{i,j}$ and $O_{i,j}$ are finite for all $i,j$.
This condition is assumed both in our definition of sufficient independence (see Section \ref{subsec:suff_ind}) and for our analysis (see Section \ref{subsec:additive_structure}).
This assumption holds for the class of graphs that will be the main focus of our applications.
However, it does limit our model in some aspects and we consider this limitation in-depth in Section \ref{subsec:stds}.

\subsection{Neighbors and the Rate of Expansion}
\label{subsec:neighbor_defs}

Let $v \in V$ be a vertex in $G = (V, E)$.
We will define the $d$th neighbors of $v$, denoted by $\mathcal{N}_d (v)$, to be the set of vertices that can be reached from $v$ in exactly $d$ steps.
Furthermore, we will define $N_{i,j} (d)$ to be the random variable obtained by choosing a random vertex $v$ of type $i$, and counting the number of $d$th neighbors of type $j$ it has.
Thus, 

\begin{equation}
    \forall i,j,d, \Vec{\nu} \in \mathbb{N} ^ r \;\;\;\; \Pr \left[ \left( N_{i, 1} (d), \ldots, N_{i, r} (d) \right) = \Vec{\nu} \right] = \Pr \left[ \left( \left\vert \mathcal{N}_d (v) \cap V_1 \right\vert, \ldots, \left\vert \mathcal{N}_d (v) \cap V_r \right\vert \right) = \Vec{\nu} \mid v \in V_i \right]
\end{equation}

Using this definition, we extend the definition of a reproductive index $R_{\text{eff}}$ from the study of simple single type graphs (on which the spread of information or diseases can be characterised by the SIR equation), to the {\em rate of expansion} of general families of random graphs.
We will base our definition on an experimental method commonly used to approximate the value of the reproductive index in  epidemics (see e.g. \cite{R01, R02, R03, R04}).
This method works by measuring the number of infected individuals as a function of time $I(t)$ and fitting it to an exponential growth $I(t) \approx c {R_{\text{eff}}} ^ {d(t)}$, where $d(t) = \frac{t}{\tau}$ is assumed to be the depth of the infection graph as a function of time and $\tau$ is the time it takes a newly infected individual to infect others.

In general, the expected number of infected nodes at depth $d$ reads:
\begin{equation}
N_{\text{ave}} (d) = \mathop{\mathbb{E}}\limits_{v \in V} \left\vert \mathcal{N}_d (v) \right\vert = \sum_{i, j} \pi_i N_{i,j} (d) \ .
\label{av}
\end{equation}
Following the methods of \cite{R01, R02, R03, R04}, we fit $N_{\text{ave}} (d)$ to an exponential growth and define:

\begin{equation}
R_{\text{eff}} \defeq \lim_{d\rightarrow \infty} \left( N_{\text{ave}}(d) \right) ^ {\frac{1}{d}}  =  \lim_{d\rightarrow \infty} \left( \mathbb{E} \sum_{i,j} \pi_i N(d)_{i,j} \right) ^ {\frac{1}{d}}
\ .
\label{R0}
\end{equation}

In order to help clarify these definitions we will consider a simple example.
In Section \ref{subsec:neighbor_analysis}, we will show how the $R_{\text{eff}}$ parameter can be computed for general degree distributions (assuming sufficient independence).

\subsubsection{A Simple Example}

We will consider a simple case where both the in and out degrees of any vertex are distributed according to a delta distribution.
Let $G = (V,E)$ be a $r=2$-type graph, with an evenly divided population ($\pi_1 = \pi_2 = \frac{1}{2}$), where all the vertices of type $1$ have in-degree $(1, 1)$ and out-degree $(1, 2)$, and all the vertices of type $2$ have in-degree $(2, 0)$ and out-degree $(1, 0)$.

In other words, every vertex of type $1$ has $2$ incoming edges ($1$ from another vertex of type $1$ and $1$ from a vertex of type $2$), and $3$ outgoing edges ($1$ to another vertex of type $1$ and $2$ to vertices of type $2$).
Furthermore, every vertex of type $2$ has $2$ incoming edges (both from vertices of type $1$) and $1$ outgoing edge (to a vertex of type $1$).
In Figure \ref{fig:example_graph} we present a visual representation of such a graph with $N=10$ vertices.

Let $v$ be some vertex.
Denote the number of $d$th neighbors of type $1, 2$ of $v$ by $N_1 (d), N_2 (d)$ respectively. 
Because of our assumptions, these can be used to compute the number of $d+1$th neighbors of $v$ (so long as there are no repeat neighbors - see Section \ref{subsec:additive_structure}).
\begin{equation}
    \left(\begin{matrix}
                    N_1 (d + 1) \\ 
                    N_2 (d + 1)
    \end{matrix}\right) =
    \left(\begin{matrix} 
                    1   &   1\\
                    2   &   0
    \end{matrix}\right)
    \left(\begin{matrix} 
                    N_1 (d) \\ 
                    N_2 (d) 
    \end{matrix}\right) \ .
    \label{eq:recursion}
\end{equation}

This forms a linear recurrence rule for $N_{i,j} (d) = N_{i,j} (d-1) + 2 N_{i,j} (d-2)$, the solutions of which can be shown to be of the form $N_{i,j} (d) = a_{i,j} \lambda_1 ^ d + b_{i,j} \lambda_2 ^ d$, where $\lambda_{1,2} = -1, 2$ are the roots of $f(x) = x^2 - x - 2$ (and also the eigenvalues of the matrix in equation \eqref{eq:recursion}) and $a_{i,j}, b_{i,j}$ are listed below.

\begin{equation}
    a_{i,j} =
    \frac{1}{5} \left(\begin{matrix} 
                    -1   &   2\\
                    2   &   -4
    \end{matrix}\right)
    \;\;\;\;\;\;\; 
    b_{i,j} = \frac{1}{2} \left(\begin{matrix} 
                    1   &   1\\
                    1   &   1
    \end{matrix}\right) \ .
\end{equation}

Therefore, $N_{\text{ave}} (d) = \sum_{i,j} \pi_j N_{i,j} (d) = a \lambda_1 ^ d + b \lambda_2 ^ d$, where $a = -\frac{1}{5}$ and $b = \frac{1}{2}$.
Clearly the second term dominates and $R_{\text{eff}} = \lim_{d\rightarrow \infty} \left(N_{\text{ave}} \left(d\right)\right) ^ {\frac{1}{d}} = \lambda_2 = 2$.

In Section \ref{subsec:neighbor_analysis} we will show that in general the value of $R_{\text{eff}}$ is determined by the largest eigenvalue of a "weighted interaction matrix".
In this example the interaction matrix would be $ M = \left(\begin{matrix} 1   &   1\\2   &   0\end{matrix}\right)$ and its largest eigenvalue is indeed $2$.

\begin{figure}[H]
\centering
\includegraphics[width=0.7\columnwidth]{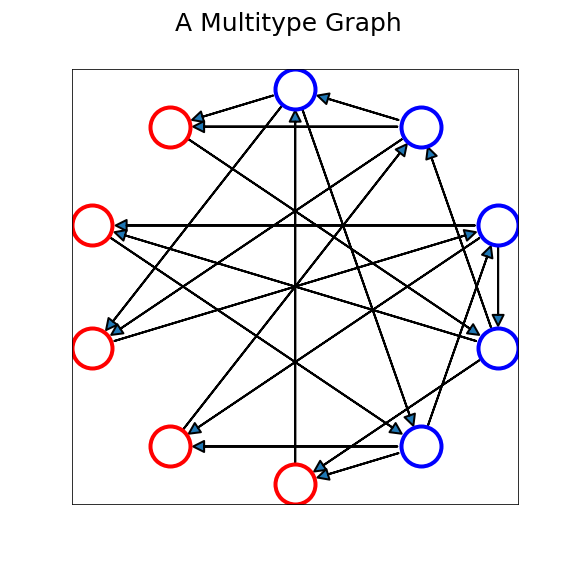} 
\caption{
An example of a $2$-type graph with a delta degree-distribution. This graph has $r=2$ types of vertices (the red and blue circles). Each vertex of type $1$ ({\color{blue} blue}) has in-degree $(1, 1)$ and out-degree $(1, 2)$, and each vertex of type $2$ ({\color{red} red}) has in-degree $(2, 0)$ and out-degree $(1, 0)$.
}
\label{fig:example_graph}
\end{figure}

\subsection{Large Out-Components}

We would like the large out-component of our graph to model the spread of an information such as a disease or some other transmission, assuming that it starts from a small fraction of the graph.
In other words, we require this component to include those vertices that are likely to be reached when spanning from a small fraction of the vertices when chosen at random.

\begin{definition}
    We say that a vertex $v$ is a \emph{member of the large out-component} of a graph $G$ if the set of vertices spanned by $v$ on the reversed graph (which we call its in-component) is very large.
\end{definition}

As we will see in the next sections, in random graphs, a constant fraction of the vertices would have bounded in-components, while the rest would have very large in-components.
If we look at the set spanned by a small fraction of the vertices in the graph, any vertex with a bounded in-component is unlikely to be a member of this spanned set, while a vertex with an unbounded in-component, is unlikely not to have at least one of these starting vertices in its in-component.
Therefore, in order to understand the fraction of the graph "easily" spanned, we will need to consider the fraction of the vertices in the graph with large in-degrees.

\subsection{Sufficient Independence}
\label{subsec:suff_ind}

A major step in the analysis of \cite{PRE_GCC} is to consider the distribution of the in-degrees of vertices when weighted by their out-degrees, i.e. giving vertices with higher out-degrees a larger weight when averaging.
Generalizing this to the multi-type case is not straightforward, since it is not clear how one should take into account the $r$-dimensional multi-type weight.
One possible approach is to take the total out-degree, but this can be unreliable.
Consider, for instance, the case where one of the types has only incoming edges, and no outgoing ones. 
Clearly, edges to this type are "dead ends" and should not be taken into account.
However weighting vertices by their total out-degree does take these edges into account, and can lead us to use an incorrect distribution on the vertices.

In order to avoid this issue, we will restrict the analysis to cases where it does not matter by which dimension of the out-degree we weight a vertex.
We will call such graphs "sufficiently independent" (see formal definition below).
Solving for this regime will still allow us to accomplish our main goal of combining the theory of multi-type graphs presented in \cite{original, book, Minzer_2021}, with the theory of graphs that have more complex degree distributions and correlations between in and out degrees \cite{corona3rd, herd1, herd3, herd4, herd5, herd6, He, science}.

We will do this using a model similar to the one considered by Britton et al. in \cite{science}.
As discussed above, the authors attribute to each individual a type and an activity level. 
The type of an individual corresponds to their age cohort (controlling which other individuals they are likely to interact with), and the activity level corresponds to their overall propensity to infect and to become infected.
We will show that if each individual has a single activity parameter affecting their interaction with all of the other types in the same manner (as is the case in \cite{science}), then the graph is sufficiently independent.
This allows us to model the behaviour of multi-type population with general activity level distributions (whereas the model proposed by Britton et al. only covers discrete activity distributions).
Additionally, as will be clear from the definition, in cases where each type interacts with only one other type, the graph is always sufficiently independent.
Therefore, bipartite graphs, which are often considered when modeling the spread of sexually transmitted diseases \cite{N2}, are also covered by this definition.

That being said, there may be many physical systems which are not sufficiently independent
and it is possible to generalize our analysis to cover such cases.
However, this would result in a rather complex model which is not likely to be applicable for most real world use cases where one does not have a sufficiently detailed distribution to begin with.
In Section \ref{subsec:lim_suff_ind}, we will further explore this limitation by showing an example of a graph for which sufficient independence does not hold and remark on how the methods shown in this paper might be adapted to that regime.

Formally, we will say that a distribution of in/out-degrees, i.e. a distribution of pairs of vectors in $\mathbb{N}^r$, is sufficiently independent if the distributions of the out-degree of a vertex is the same when weighted by any coordinate of its in-degree. Similarly, the in-degrees can be weighted by any of the out-degrees.
Consider a random vertex $v$ of type $i$.
Denote by $I_{i,j}$ the distribution of its $j$th in-degree, i.e. the number of edges from vertices of type $j$ to $v$, and by $O_{i, j}$ the distribution of the $j$th out-degree of $v$, i.e. the number of edges going from $v$ to vertices of type $j$. We denote $\vec{I_i} = (I_{i,1},...,I_{i,r}), \vec{O_i} = (O_{i,1},...,O_{i,r})$.
Keeping in mind that $\vec{I_i}$ and $\vec{O_i}$ can be correlated as they are the in and out degrees of the same vertex, our condition for sufficient independence is that for all $i,k,l$ (such that $\langle O_{i,k} \rangle,  \langle O_{i,l} \rangle \neq 0$ for equation \ref{eq:in1} or $\langle I_{i,k} \rangle, \langle I_{i,l} \rangle \neq 0$ for equation \ref{eq:in2}) and for all $\vec{x}$:
\begin{equation}
    \begin{aligned}
\sum_y \frac{y}{\langle O_{i,k} \rangle} \Pr\left[\vec{I_{i}} = \vec{x} \wedge O_{i,k} = y\right] = \sum_z \frac{z}{\langle O_{i,l} \rangle} \Pr\left[ \vec{I_{i}} = \vec{x} \wedge O_{i,l} = z\right] \ ,
    \end{aligned}
     \label{eq:in1}
\end{equation}

\begin{equation}
    \begin{aligned}
\sum_y \frac{y}{\langle I_{i,k} \rangle} \Pr\left[ \vec{O_{i}} = \vec{x} \wedge I_{i,k} = y\right] = \sum_z \frac{z}{\langle I_{i,l} \rangle} \Pr\left[ \vec{O_{i}} = \vec{x} \wedge I_{i,l} = z\right]  \ ,
    \end{aligned}
    \label{eq:in2}
\end{equation}
where $\langle...\rangle$ is the mean value of the distribution.

Under this assumption, we can define the in/out degree distribution weighted by the out/in degree distribution.
\begin{definition}
\label{def:biased}
Let $G$ be a sufficiently independent multi-type random graph with in and out degree distributions $I_{i,j}, O_{i,j}$, respectively. We define its biased in/out degree distributions as:
\begin{equation}
    \begin{aligned}
\Pr[B\vec{I_{i}} = \vec{x}] = \sum_y \frac{y}{\langle O_{i,k} \rangle} \Pr\left[\vec{I_{i}} = \vec{x} \wedge O_{i,k} = y\right] \ ,
    \end{aligned}
\end{equation}
\begin{equation}
    \begin{aligned}
\Pr[B\vec{O_{i}} = \vec{x}] = \sum_y \frac{y}{\langle I_{i,k} \rangle} \Pr\left[\vec{O_{i}} = \vec{x} \wedge I_{i,k} = y\right]  \ .
    \end{aligned}
\end{equation}
\end{definition}
These are well defined even though we did not specify $k$ precisely because of our independence assumption
(\ref{eq:in1}) and (\ref{eq:in2}).

\subsection{Multivariate Generating Functions}
In order to generalize the results of \cite{PRE_GCC} to the multi-type setting, we will first generalize their analysis of single variable generating functions to the multivariate regime.
Working with multivariate generating functions instead of high dimensional probability distributions will allow us to complete our analysis with a greater deal of elegance.

Consider first a random graph with a large number $N$ of single type of vertices. 
Let $A$ be a random variable that corresponds to the distribution of the degrees of these vertices.
Define a generating function by \cite{PRE_GCC}:
\begin{equation}
G_A (z) = \sum_{\alpha = 0}^{\infty} 
p_{\alpha} {z}^{\alpha} \ ,
\label{one}
\end{equation}
where $p_{\alpha} = \Pr [A=\alpha]$ is the probability that a vertex chosen at
random will have a degree $\alpha$. The normalization of the sum of the probabilities to one gives 
$G_A (1)=1$. For this reason as well (\ref{one}) is well defined in the range $|z| \leq 1$.

The generating function encodes the information of the probability distribution. 
The probabilities $p_{\alpha}$ are obtained by appropriate derivatives of (\ref{one})
with respect to $z$ at $z=0$:
\begin{equation}
p_{\alpha} = \frac{1}{\alpha !}\frac{d^{\alpha} G_A(z)}{d z^{\alpha}}\vert_{z=0} \ ,   
\end{equation}
while the moments of the distribution are obtained by the derivatives of (\ref{one}) at $z=1$:
\begin{equation}
\langle \alpha^n \rangle = \sum_{\alpha = 0}^{\infty} \alpha^n p_{\alpha} = \left(z \frac{d}{d z}\right)^{\alpha} G_A(z)\vert_{z=1} \ .   
\end{equation}
In the following we generalize (\ref{one}) in two ways, first by having multi-type
vertices making $z$ a vector and second by allowing multiple random variables
by making $A$ a vector.
\begin{definition} [Multivariate Generating Functions]
Let $A:\Omega\rightarrow \mathbb{N}^r$ be a random variable whose values are non-negative integer values vectors of dimension $r$ (where $r$ is the number of types in our graph).
Its multivariate generating function is: 
\begin{equation}
G_A \left(\Vec{z}\right) = G_A \left(z_1 , \dots, z_r\right) \defeq \sum_{\alpha_1,...,\alpha_r} p_{\alpha_1...\alpha_r} {z_1}^{\alpha_1} \cdot \dots \cdot 
{z_r}^{\alpha_r}  \ ,
\end{equation}
where $p_{\alpha_1..\alpha_r} = {\Pr [A=\Vec{\alpha}]}$ is the probability that $A$ has the value
$\Vec{\alpha} = (\alpha_1,...,\alpha_r)$.
Let $\Vec{A} = \left(A_1 , \dots, A_m\right)$ be a vector of such random variables, $A_i :\Omega\rightarrow \mathbb{N}^r$.
We define their $r\rightarrow m$ multivariate generating function as:
\begin{equation}
G_{\Vec{A}} \left(\Vec{z}\right) \defeq \left(G_{A_1} \left(\Vec{z}\right), \dots G_{A_m} \left(\Vec{z}\right)\right) \ ,
\end{equation}

where
\begin{equation}
G_{A_i} \left(\Vec{z}\right) = G_{A_i} \left(z_1 , \dots, z_r\right) \defeq \sum_{\alpha_1,...,\alpha_r} p_{\alpha_1...\alpha_r}^{(i)} {z_1}^{\alpha_1} \cdot \dots \cdot 
{z_r}^{\alpha_r}  \ ,
\end{equation}
with $p_{\alpha_1..\alpha_r}^{(i)} = {\Pr [A_i=\Vec{\alpha}]}$.
\end{definition}

The next ingredient in the analysis is the product of the generating functions of independent variables. 
Consider the products of the generating function in the single-type case
\cite{PRE_GCC}:
\begin{equation}
\prod_{i=1}^{k} G_{A_i} (z)  = \sum_{\alpha_1+...+\alpha_k = n, n=0}^{\infty} 
p_{\alpha_1}^{(1)}\cdot\cdot\cdot  p_{\alpha_k}^{(k)}{z}^{(\alpha_1+...+\alpha_k)} = G_{\sum_{i=1}^{k} A_i} (z)  \ ,
\label{power}
\end{equation}
where $p_{\alpha_i}^{(i)} = \Pr [A_i=\alpha_i]$ and
the summation is over all the $k$ partitions of $n$.
We generalize this in the multi-type case as follows.
\begin{lemma}
Let $\Vec{A}, \Vec{B}$ be independent vectors of random variables, i.e. $A_i$ and $B_i$ are independent for all $i$, with $r \rightarrow m$ multivariate generating functions $G_{\Vec{A}}, G_{\Vec{B}}$.
Let $\Vec{C} = \Vec{A} + \Vec{B}$ be their sum, $C_i = A_i + B_i$.
Then
$G_{\Vec{C}} = G_{\Vec{A}} \odot G_{\Vec{B}}$, where the $\odot$ denotes point-wise multiplication.
\end{lemma}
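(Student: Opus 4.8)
The plan is to prove the identity coordinate by coordinate and, for each coordinate, reduce to the classical single-variable product formula \eqref{power}, carried out with multi-indices in place of scalar exponents. Since the point-wise product $G_{\vec A}\odot G_{\vec B}$ is defined componentwise, it suffices to fix an index $i\in\{1,\dots,m\}$ and to establish that $G_{C_i}(\vec z)=G_{A_i}(\vec z)\,G_{B_i}(\vec z)$ for all $\vec z$ in the polydisc $\{|z_1|\le 1,\dots,|z_r|\le 1\}$ on which all three generating functions are well defined; assembling the $m$ resulting scalar identities then yields $G_{\vec C}=G_{\vec A}\odot G_{\vec B}$.

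First I would record the convolution formula for the law of $C_i=A_i+B_i$: because $A_i$ and $B_i$ are independent $\mathbb N^r$-valued random variables, for every $\vec\gamma\in\mathbb N^r$,
\[
\Pr[C_i=\vec\gamma]=\sum_{\vec\alpha+\vec\beta=\vec\gamma}\Pr[A_i=\vec\alpha]\,\Pr[B_i=\vec\beta],
\]
the sum ranging over pairs $\vec\alpha,\vec\beta\in\mathbb N^r$. Next, using the multi-index shorthand $\vec z^{\,\vec\alpha}:=z_1^{\alpha_1}\cdots z_r^{\alpha_r}$ together with the elementary identity $\vec z^{\,\vec\alpha}\,\vec z^{\,\vec\beta}=\vec z^{\,\vec\alpha+\vec\beta}$, I would multiply the two defining series, regroup terms according to the value of $\vec\gamma=\vec\alpha+\vec\beta$, and recognize the resulting inner coefficient as $\Pr[C_i=\vec\gamma]$ by the convolution formula above. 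This reproduces $\sum_{\vec\gamma}\Pr[C_i=\vec\gamma]\,\vec z^{\,\vec\gamma}=G_{C_i}(\vec z)$, which is precisely the $r$-variable analogue of \eqref{power}.

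The only step requiring care — and the closest thing to an obstacle here — is justifying the rearrangement of terms in the Cauchy product of the two multivariate power series (grouping by $\vec\alpha+\vec\beta$ and interchanging the order of the double summation). This is harmless because all coefficients are probabilities, hence non-negative with $\sum_{\vec\alpha}\Pr[A_i=\vec\alpha]=\sum_{\vec\beta}\Pr[B_i=\vec\beta]=1$; thus both series converge absolutely on the closed unit polydisc, and absolute convergence of the double array (equivalently, Tonelli applied to the non-negative terms $|p^{(i)}_{\vec\alpha}\,q^{(i)}_{\vec\beta}\,\vec z^{\,\vec\alpha+\vec\beta}|$) licenses the regrouping. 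With this in hand the computation is purely formal, and the lemma follows once $i$ ranges over all $m$ components.
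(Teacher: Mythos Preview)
Your proposal is correct and follows essentially the same approach as the paper: writing out the coefficients of $G_{\vec A}\odot G_{\vec B}$ and recognising them as the convolution of the coefficients of $G_{\vec A}$ and $G_{\vec B}$. The paper states this in a single sentence, whereas you spell out the multi-index Cauchy product and add an absolute-convergence justification for the rearrangement, but the underlying argument is the same.
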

\begin{proof}
As in the single-type case (\ref{power}) this lemma  can be shown by writing the coefficients of $G_{\Vec{A}} \odot G_{\Vec{B}}$ directly and seeing that they correspond to a convolution of the coefficients of $G_{\Vec{A}}$ and $ G_{\Vec{B}}$.
\end{proof}

Consider next the composition of generating functions.
In the single type case one has \cite{PRE_GCC}:
\begin{equation}
    G_A(G_B(z)) = \sum_{\alpha= 0}^{\infty} 
p_{\alpha} (G_B(z))^{\alpha} = \sum_{\alpha= 0}^{\infty} 
p_{\alpha} (G_{B_1 + \dots + B_\alpha}(z))\ .
\label{compo}
\end{equation}
Equation (\ref{compo}) means that applying the generating function of $A$ to the generating function of $B$ is equivalent to producing $A$ independent copies of the distribution $B$ and summing over them.
This can be generalised to the multivariate case in the following manner.

\begin{definition}
Let $\Vec{A}, \Vec{B}$ be vectors of random variables with $n \rightarrow m$ and $m \rightarrow k$ multivariate generating functions $G_{\Vec{A}}, G_{\Vec{B}}$ respectively.
We define their composition:
\begin{equation}
G_{\Vec{B}}\left(G_{\Vec{A}} \left( \Vec{z} \right) \right) \defeq \left( G_{B_1} \left( G_{\Vec{A}} \left(\Vec{z}\right) \right), \dots G_{B_k} \left( G_{\Vec{A}} \left(\Vec{z}\right) \right) \right) \ .
\end{equation}
\end{definition}

\begin{lemma} [Composition Lemma]
    Let $\Vec{A}, \Vec{B}$ be independent vectors of random variables (i.e. $A_i$ and $B_i$ are independent for all $i$), with $r \rightarrow m$ and $m \rightarrow k$ multivariate generating functions $G_{\Vec{A}}, G_{\Vec{B}}$ respectively.
    For each $i \in \{1, \dots, r\}$, we define the random variable $C_i$ to be the result of the following random process:
    \begin{itemize}
        \item A vector $\Vec{b}\in\mathbb{N}^m$ is chosen according to the distribution $B_i$.
        \item For each $j \in \{1, \dots, m\}$, $b_j$ vectors $\Vec{a}^{j}_1, \dots, \Vec{a}^j_{b_j} \in \mathbb{N} ^ r$ are chosen i.i.d according to the distribution $A_j$.
        \item The result $C_i$ is the sum over all of these $C_i = \sum_{j \in \{1, \dots, m\}} {\sum_{\beta \in \{1, \dots, b_j\}} {a^j_\beta}}$
    \end{itemize}
    
    Then,
    \begin{equation}
    G_{\Vec{C}}\left( \Vec{z} \right) = G_{\Vec{B}}\left(G_{\Vec{A}} \left( \Vec{z} \right) \right) \ .
    \label{gen}
    \end{equation}
    Equation (\ref{gen}) is a generalization of (\ref{compo}).
    \label{lem:comp}
\end{lemma}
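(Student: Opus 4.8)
The plan is to verify \eqref{gen} coordinate by coordinate: I fix a coordinate $i$ and show $G_{C_i}(\vec z) = G_{B_i}(G_{\vec A}(\vec z))$, which is exactly the $i$-th component of the composition as defined above. The engine of the argument is to condition on the first step of the process generating $C_i$, namely the draw $\vec b = (b_1,\dots,b_m) \in \mathbb N^m$ from $B_i$. Conditioned on $\vec b$, the vector $C_i$ is by construction a sum of $b_1$ i.i.d.\ copies of $A_1$, then $b_2$ i.i.d.\ copies of $A_2$, \dots, then $b_m$ i.i.d.\ copies of $A_m$, with all of these $\sum_j b_j$ summands mutually independent; the whole family is moreover independent of $\vec b$, since $\vec A$ and $\vec B$ are independent. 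Writing $\vec z^{\vec c} := z_1^{c_1}\cdots z_r^{c_r}$, so that $G_{C_i}(\vec z) = \Expect{}{\vec z^{C_i}}$, and applying the product lemma (the $\odot$-lemma above, which generalises \eqref{power}), iterated by a trivial induction to a sum of finitely many independent $\mathbb N^r$-valued variables, I get
\begin{equation}
\cExpect{}{\vec b}{\vec z^{C_i}} = \prod_{j=1}^m \left(G_{A_j}(\vec z)\right)^{b_j} \ .
\end{equation}

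Taking the expectation over $\vec b \sim B_i$ and abbreviating $w_j := G_{A_j}(\vec z)$, so that $\vec w = (w_1,\dots,w_m) = G_{\vec A}(\vec z)$, this gives
\begin{equation}
G_{C_i}(\vec z) = \Expect{\vec b \sim B_i}{\prod_{j=1}^m \left(G_{A_j}(\vec z)\right)^{b_j}} = \Expect{\vec b \sim B_i}{\vec w^{\vec b}} = \sum_{b_1,\dots,b_m} \Pr[B_i = \vec b]\, w_1^{b_1}\cdots w_m^{b_m} = G_{B_i}(\vec w) \ ,
\end{equation}
which is precisely $G_{B_i}(G_{\vec A}(\vec z))$. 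Since this holds for every coordinate $i$, identity \eqref{gen} follows; specialising to $r = m = k = 1$ recovers the single-type composition \eqref{compo}.

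A few technical points deserve to be spelled out, though none is deep. First, \emph{well-definedness and Fubini}: each generating function satisfies $|G_{A_j}(\vec z)| \le G_{A_j}(\vec 1) = 1$ on the closed polydisc $\{\,|z_t| \le 1 : t = 1,\dots,r\,\}$, so $\vec w = G_{\vec A}(\vec z)$ again lies in the closed polydisc and the series defining $G_{B_i}(\vec w)$ converges absolutely; consequently the outer expectation over $\vec b$ may be exchanged with the series defining the $G_{A_j}$'s and with the series defining $G_{B_i}$, everything in sight being an absolutely convergent series with non-negative coefficients. Second, \emph{the induction behind the product lemma}: it is stated for the sum of two independent vectors, and I need it for the sum of a ($\vec b$-dependent) finite number of independent $\mathbb N^r$-valued variables; this is an immediate induction on that number and is what produces the exponents $b_j$. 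Third, \emph{independence bookkeeping}: one checks that the copies $\vec{a}^j_\beta$ are i.i.d.\ in $\beta$ for each fixed $j$, independent across $j$, and that this entire family is independent of the outer draw $\vec b$ — all of which is built into the description of the process together with the hypothesis that $\vec A$ and $\vec B$ are independent. The one genuinely conceptual step, and the place a reader is most likely to pause, is recognising that conditioning on $\vec b$ reduces the defining process to exactly the hypothesis of the product lemma; after that, the substitution $w_j = G_{A_j}(\vec z)$ makes the composition $G_{B_i}(G_{\vec A}(\vec z))$ appear mechanically.
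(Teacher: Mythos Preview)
Your proof is correct and follows essentially the same route as the paper: condition on the outer draw $\vec b\sim B_i$, use the product lemma to identify the conditional generating function as $\prod_j G_{A_j}(\vec z)^{b_j}$, then average over $\vec b$ to recognise $G_{B_i}(G_{\vec A}(\vec z))$. The paper's version is terser, while you add the Fubini/convergence and induction-on-summands remarks, but the argument is the same.
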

\begin{proof}
For each value of $\Vec{b}$ the generating function of the conditional distribution is
\begin{equation}
G_{C_i \mid B_i = \Vec{b}} \left( \Vec{z} \right) =\prod_{j \in \{1, \dots, m\}} {\left(G_{A_j} \left( \Vec{z} \right)\right) ^ {b_j}} \ ,
\end{equation}
and the final generating function of $C_i$ is

\begin{equation}
    \begin{aligned}
G_{C_i \mid B_i = \Vec{b}} \left( \Vec{z} \right) &= \sum_{\Vec{b} \in \mathbb{N} ^ m} \Pr[B_i = \Vec{b}] G_{C_i \mid B_i = \Vec{b}} \left( \Vec{z} \right) =\\
&=\sum_{\Vec{b} \in \mathbb{N} ^ m} \Pr[B_i =\Vec{b}] \prod_{j \in \{1, \dots, m\}} {\left(G_{A_j} \left( \Vec{z} \right)\right) ^ {b_j}} =  G_{B_i}\left(G_{\Vec{A}} \left( \Vec{z} \right) \right) \ .
    \end{aligned}
\end{equation}
\end{proof}

\section{Multi-Type Branching Process}
In this section we generalize the analysis of  \cite{PRE_GCC} from single-type random
graphs to multi-type random graphs \cite{Minzer_2021}.
We use the multi-type Galton-Watson branching process and derive the equations for
the sizes of the large out-components.
We show that this is indeed well defined, since a certain fraction of the vertices in the graph will have a large in-component, meaning that any large enough set will span them with high probability.

\subsection{The Additive Structure of Connected Components}
\label{subsec:additive_structure}

Following the same steps as the derivation in \cite{PRE_GCC}, we will show that if the size of the connected component originating from some vertex $v$ is bounded, then the probability that it contains a cycle is $O\left(N^{-1}\right)$.
Moreover, we will show so long as the set of vertices that can be reached from $v$ through $d$ steps (where $d$ is not necessarily $O(1)$) is only a negligible fraction of the graph, then it is "probably almost a tree" (wp $1-o(1)$ the number of vertices is at least $1 - o(1)$ times the number of edges).

Proving this would show an additive structure to the size of the sets spanned by vertices in the graph.
Let $v$ be a vertex in the graph, and define the $d$th neighbors of $v$, $\mathcal{N}_d (v)$, to be the set of vertices that can be reached from $v$ in exactly $d$ steps.
If the sub-graph containing them is tree-like, then 
\begin{equation}
\left\vert \mathcal{N}_d (v) \right\vert = \left\vert \bigcup\limits_{w \in \mathcal{N}_1 (v)} \mathcal{N}_{d-1} (w) \right\vert \approx \sum\limits_{w \in \mathcal{N}_1 (v)} \left\vert\mathcal{N}_{d-1} (w) \right\vert
\ .
\end{equation}
This additive structure in the components will be the basis of most of our analysis.

Let $v$ be a vertex in the graph.
Consider a BFS (breadth first search) walk on the vertices spanned by $v$ (if necessary, abort the walk when it reaches depth $d+1$), and let $u$ be a vertex in this traversal.
This walk begins at a starting vertex $v$, and enumerates the vertices spanned by it, in order of their distance from $v$.

In the step of the BFS process starting from $u$, the outgoing edges $e=(u,t)$ from $u$ are enumerated and for each $e$, we view its target $t$ to see if it has already been visited in this BFS process.
We will prove our claims by bounding the probability that the target of any step is already in the spanned set.
By our construction of the graph, the probability of any given vertex $w$ to the spanned set is determined only by its type and its in-degree.
In other words, any given $w \in \text{span}\{v\}$ might be a more likely candidate for being the target of an outgoing edge from $u$, either due to its high in-degree ($w$ is selected from a distribution that is somewhat biased towards a higher in-degree) or because of its type, but by our construction, it cannot have any other bias towards it.
We will show that neither of these can skew the distribution too much.

Since $\pi_j$ fraction of the population are of type $j$, conditioning on the fact that $w$ is of type $j$ can only increase the probability of $w$ appearing again by a factor of $\pi_j ^{-1}$.
Furthermore, assuming $u$ is of type $i$ and $w$ is of type $j$, $w$ is only $\frac{\deg_{\text{in}, i} (w)}{\mathbb{E} I_{i,j}}$ more likely to be selected due to its in-degree.
In other words:

\begin{equation}
\Pr\left[ t=w\right] \leq \frac{ 1 }{N } \cdot \frac{\deg_{\text{in},i} \left(w\right)}{ \mathbb{E} I_{i,j} \pi_j}\ .
\label{eq:col_prob}
\end{equation}

In principle, it is possible for some vertices to have a very large in/out degree.
For instance, a common model for the spread of sexually transmitted diseases assumes a power-law degree distribution with power $\alpha \approx 3.2$ (see Section \ref{subsec:stds}).
In such a case we would expect to see some individuals with out-degrees of the order of $N^{\alpha^{-1}} \gg 1$.
However, so long as the second moments of $I_{i,j}$ are finite, this becomes the exception rather than the rule.

Denote by $\mathcal{S} \subseteq \text{span}\{v\}$ the set of vertices already visited by the BFS.
Averaging over $w \in \mathcal{S}$ of type $j$, their $i$th in-degree would be $\mu_{i,j} = \frac{\mathbb{E} {I_{i,j}} ^ 2}{\mathbb{E} I_{i,j}}$ .
Since these are all finite, we can insert them into the Equation \ref{eq:col_prob} to get:

\begin{equation}
\Pr\left[ t \in \mathcal{S} \right] \leq \frac{ \left\vert\mathcal{S}\right\vert }{N } \cdot \max_{i,j} \left(\frac{\mathbb{E} \left({I_{i,j}} ^ 2\right)}{ \left(\mathbb{E} I_{i,j}\right)^2 \pi_j}\right) \leq \frac{ \left\vert \text{span}\{v\} \right\vert }{N } \cdot \max_{i,j} \left(\frac{\mathbb{E} \left({I_{i,j}} ^ 2\right)}{ \left(\mathbb{E} I_{i,j}\right)^2 \pi_j}\right) = O\left(\frac{ \left\vert \text{span}\{v\} \right\vert }{N }\right)
\ .
\end{equation}
This traversal has enumerated over all of the edges spanning from $v$ and we have shown that on average, almost all of them lead to previously unexplored vertices.
Therefore, using the Markov inequality, we can conclude that with high probability almost all of them lead to new vertices.
In other words, the size of the connected component of a vertex is almost always close to the sum of the sizes of the connected components of its neighbors.

\subsection{Multi-Type Galton-Watson Analysis}

Consider an infection originating from a single individual.
If, on average, an infected individual  tends to infect less than one new individual, it is clearly unlikely that this infection would reach a significant portion of the graph (for instance, since the expectancy of infected individuals is less than $1$, allowing us to bound this probability by Markov's inequality).
Even if on average an infected individual tends to infect more than one new individual, the infection can still peter out early on (for instance if the patient zero happens not to be very infectious even though the average patient is infectious).
However, once sufficiently many individuals are infected, it becomes extremely unlikely for the disease not to grow (due to the law of large numbers).
This growth will not go completely unchecked, and will end once a large enough percentage of the population has been infected and herd immunity is reached.
If the first moment of its biased degree distribution (see definition \ref{def:biased}) is finite, then this happens after some constant fraction of the population has been infected.

Our goal in what we call a "Galton-Watson analysis" is to find the probability that a randomly chosen vertex will be among those who span a large portion of the graph.
Analysing this probability will help us find the portion of the population expected to be in each of these large out-components.
Naturally, this analysis depends on the distribution of the out-degrees of the vertices, but it can also depend on the joint in and out degree distribution.
As we will show, when the in and out degree distributions are sufficiently independent, it suffices to know the out-degree distribution and the weighted out-degree distribution in order to find this probability.

Let $O_i = \left(O_{i,1}, \ldots, O_{i,r}\right)$ denote the distribution of the out-degree of an individual of type $i$ when chosen uniformly at random. 
Let $BO_i$, the biased out-degree distribution, be the out-degree distribution of an individual of type $i$
when the individual is chosen weighted by his in-degree (see definition \ref{def:biased}).
Denote $\Vec{O} = (O_1,...,O_r)$,  $B\Vec{O} = (BO_1,...,BO_r)$ and  the corresponding multivariate generating functions by $G_{\Vec{O}}$ and $G_{B\Vec{O}}$.

\subsubsection{Neighbors Distribution}

\label{subsec:neighbor_analysis}

Consider the random variables $N(d)_{i,j}$ defined in Section \ref{subsec:neighbor_defs} that denote the distribution of $d$th neighbors of type $j$ of a random vertex of type $i$.
Let $BN(d)_{i,j}$ denote the same distribution over vertices weighted by their in-degree.
Since the number of $d$th neighbors of a vertex depend only on the number and type of its out-going edges, these are well-defined for sufficiently independent graphs.
For instance,
\begin{equation}
N(1)_{i,j} = O_{i,j},~~~BN(1)_{i,j} = BO_{i,j} \ .
\end{equation}

Next we consider the case where $d=2$.
Neglecting vertices that can be reached from our starting vertex in more than one way (see Section \ref{subsec:additive_structure}), the distribution of the number of second neighbors is the distribution of the sum of the number of first neighbors of each of the first neighbors of our starting vertex.
Since the neighbors are independent of each other, we can apply the composition lemma (\ref{lem:comp}).
Therefore, the generating function of the distribution of the number of second degree neighbors of a random vertex reads: 
\begin{equation}
G_{\Vec{N}(2)} \left( \Vec{z} \right) = G_{\Vec{O}} \left(G_{B\Vec{O}} \left( \Vec{z} \right) \right) \ .
\end{equation}
In the general $d$th neighbors case, we have:
\begin{equation}
G_{B\Vec{N}(d)} \left( \Vec{z} \right) = G_{B\Vec{O}} \left( G_{B\Vec{N}(d-1)} \left( \Vec{z} \right) \right) 
\end{equation}
and:
\begin{equation}
G_{\Vec{N}(d)} \left( \Vec{z} \right) = G_{\Vec{O}} \left( G_{B\Vec{N}(d-1)} \left( \Vec{z} \right) \right) = G_{\Vec{O}} \left(G_{B\Vec{O}} ^ {d-1} \left( \Vec{z} \right) \right) \ .
\label{eq:GNd}
\end{equation}

We will work with the definitions of $N_{\text{ave}}(d)$ (\ref{av}) and $R_{\text{eff}}$ (\ref{R0}) presented in Section \ref{subsec:neighbor_defs}.
We can now use equation \eqref{eq:GNd} to find the value of $R_{\text{eff}}$ for all sufficiently independent graphs.
We will do this using a Taylor expansion of $ G_{\Vec{N}(d)} $ near $\vec{1}$.
Let $\vec{\varepsilon}$ be a small perturbation for the Taylor series:
\begin{equation}
G_{\Vec{N}(d)} \left( \Vec{1} - \Vec{\varepsilon} \right) = \Vec{1} - M(d) \Vec{\varepsilon} \pm O\left(\| \Vec{\varepsilon} \| ^ 2 \right) \ ,
\end{equation}
where
\begin{equation}
M(d)_{i,j} = \frac{\partial G_{N(d)_i} \left( \Vec{z} \right)}{\partial z_j} \mid_{\Vec{z} = \Vec{1}} \ 
\end{equation}

Consider next the Taylor expansion:
\begin{eqnarray}
G_{\Vec{N}(d)} \left( \Vec{1} - \Vec{\varepsilon} \right) &=& G_{\Vec{O}} \left({G_{B\Vec{O}}} ^ {d-1} \left( \Vec{1} - \Vec{\varepsilon} \right) \right) = \Vec{1} - M_{\Vec{O}} M_{B\Vec{O}} ^ {d-1} \Vec{\varepsilon} \pm O\left(\| \Vec{\varepsilon} \| ^ 2 \right) = \nonumber\\
&=& \Vec{1} - \sum_{\lambda \in eig\left(M_{B\Vec{O}}\right)} \lambda^{d-1} M_{\Vec{O}} 
\Vec{v}_\lambda \langle \Vec{v}_\lambda, \Vec{\varepsilon} \rangle \ ,
\end{eqnarray}
where:
\begin{equation}
\left(M_{\Vec{O}}\right)_{i,j} = \frac{\partial \left(G_{\Vec{O}}\right)_i \left( \Vec{z} \right)}{\partial z_j} \mid_{\Vec{z} = \Vec{1}},~~~~\left(M_{B\Vec{O}}\right)_{i,j} = \frac{\partial \left(G_{B\Vec{O}}\right)_i \left( \Vec{z} \right)}{\partial z_j} \mid_{\Vec{z} = \Vec{1}} \ ,
\end{equation}
and $\Vec{v}_\lambda$ is the eigen-vector of $M_{B\Vec{O}}$ with eigenvalue $\lambda$.

We get:
\begin{equation}
R_{\text{eff}} = \lim_{d\rightarrow \infty} \left( \mathbb{E} \sum_{i,j} \pi_i N(d)_{i,j} \right) ^ {\frac{1}{d}} = 
\lim_{d\rightarrow \infty} \left( \sum_{\lambda \in eig(M_{B\Vec{O}})} c(\lambda) \lambda^d \right) ^ {\frac{1}{d}} = 
\max_{\lambda \in eig\left(M_{B\Vec{O}}\right)} \{|\lambda|\} \ , 
\label{eq:R0}
\end{equation}
where $c(\lambda) = \frac{\langle v_\lambda, \vec{1}\rangle \langle M_{\vec{O}} v_\lambda, \vec{\pi}\rangle}{\lambda} = \Theta(1)$ as a function of $d$.

In Section \ref{subsec:poisson_example} we will derive $G_{\Vec{O}}$ and $G_{B\Vec{O}}$ for Poisson distributions.
In this case:
\begin{equation}
G_{\Vec{O}} \left(\vec{z}\right) = G_{B\Vec{O}} \left(\vec{z}\right) = \exp \left(M \left(\vec{z} - 1\right)\right) \ .
\end{equation}
This gives us $M_{B\Vec{O}} = M$, and $R_{\text{eff}} = \max_{\lambda \in eig\left(M \right)} \{|\lambda|\}$, which is consistent with the results of \cite{Minzer_2021}.

Another commonly used notion of the reproductive index is as a threshold indicator for the spread of a disease: when $R_{\text{eff}} \leq 1$, the large out-component should be negligible and when $R_{\text{eff}} > 1$ the large out-component should be a set fraction of the vertices.
This aspect of the reproductive index was considered for multi-type graphs in \cite{Minzer_2021}.
In Figure \ref{fig:R0_opt} we consider a set of multi-type graphs with values of $R_{\text{eff}}$ near $1$. 
It is clear that in the example computed for this figure, our definition of $R_{\text{eff}}$ also acts as a threshold indicator.
We will comment on this in the discussion section.

\subsubsection{Spanned Sets}
\label{subsec:spanned_sets}
Returning to our main problem, we are interested in the distribution
of the sizes of spanned sets in the graph.
Let $S_i$ and $BS_i$ be the distributions of the sizes of the sets spanned from uniformly random and weighted vertices of type $i$, respectively.
$BS_i$ is well defined, since the distribution of the size of the set spanned from any vertex depends only on its out-degree, and we know how to weight the out-degree distribution by the in-degree due to the sufficient independence of the graph.

Consider the generating functions $G_{\Vec{S}}$ and $G_{B\Vec{S}}$, analogous
to $H_0$ and $H_1$ defined by Newman at al. \cite{PRE_GCC}.

Applying the composition lemma once more, we obtain:
\begin{eqnarray}
G_{\Vec{S}} \left( \Vec{z} \right) &=& G_{\Vec{O}} \left(G_{B\Vec{S}} \left( \Vec{z} \right) \right) \ , \nonumber\\
G_{B\Vec{S}} \left( \Vec{z} \right) &=& G_{B\Vec{O}} \left(G_{B\Vec{S}} \left( \Vec{z} \right) \right) \ .
     \label{GGCD}
\end{eqnarray}
Given $G_{\Vec{O}}$ and   $G_{B\Vec{O}}$ we could in principle 
solve (\ref{GGCD}), first for  $G_{B\Vec{S}}$ using the second
equation and then for $G_{\Vec{S}}$ using the first equation.
However, this is a complicated task and we will follow and generalize the procedure taken in 
\cite{PRE_GCC}.
From (\ref{GGCD}) we have:
\begin{equation}
\Vec{u} \defeq G_{B\Vec{S}} \left( \Vec{1} \right) = G_{B\Vec{O}} \left(G_{B\Vec{S}} \left( \Vec{1} \right) \right) = G_{B\Vec{O}} \left( \Vec{u} \right) \ , 
\end{equation}
and the probability that the connected component originating from a random vertex of type $i$ is not small reads: 
\begin{equation}
p_i = 1 - G_{S_i} \left( \Vec{1} \right) = 1 - G_{O_i} \left(G_{B\Vec{S}} \left( \Vec{1} \right) \right)  = 1 - G_{O_i}(\Vec{u}) \ .
\end{equation}

\subsection{Largest Out-Components}

In the previous subsection we derived an equation that relates the probability that the connected component originating from a random vertex is large to the parameters of the random graph. This is similar to the question of whether or not a Galton-Watson process will terminate.
However, we will be interested in a slightly different question, namely: "what is the size of the largest connected component?".
That is, "what is the probability that a random vertex can be reached by a large portion of the graph?".
The difference between these two questions is not important when dealing
with undirected graphs where a vertex can be reached by the giant connected component (GCC) if and only if the GCC can be reached by it.
When the graph is directed, these questions are interchangeable by reversing the edges of the graph.

Define $I_i$ and $BI_i$ to be the distributions of the in-degrees of a random vertex of type $i$ and the in-degrees weighted by its out-degree, respectively (these are the equivalents of $O$ and $BO$ in the reversed graph).
Since our definition of sufficient independence was symmetric, the reversed graph is also sufficiently independent and these are well defined.
Similarly, we will define $C_i$ and $BC_i$ to be the distributions of the size of the component that spans a uniformly random and a weighted vertex. 
These are the equivalent of $S_i$ and $BS_i$ of the reversed graph, and as in the Galton-Watson analysis we may use their generating functions to find the probabilities that they are finite:

\begin{equation}
\Vec{p} = 1 - G_{\Vec{I}} \left(\Vec{u}\right) \ ,
\end{equation}
where 
\begin{equation}
\Vec{u} = {G}_{B\Vec{C}} \left( \Vec{1} \right) = G_{B\Vec{C}} \left( \Vec{u} \right) \ .
\label{eq:root}
\end{equation}

\section{Analytical Examples of Generating Functions}
\label{sec:analytical_examples}

In this section we will consider various degree distributions, show that they are sufficiently independent, and analytically construct the above generating functions.

\subsection{Poisson Distribution}

\label{subsec:poisson_example}

Similar to the single-type case, the most natural model for multi-type random graphs is the case where each in/out degree is chosen from an independent Poisson distribution.
That is why this model has been the focus of much of the previous work on multi-type graphs \cite{original, book, science, Minzer_2021}.

In the case of a multi-variate Poisson degree distribution, each vertex $u$ of type $i$ has an edge to each vertex $v$ of type $j$ w.p. $p_{i,j}$.
There are $n_j$ vertices of type $j$ and $p_{i,j}$ scales as the inverse of the total population size, thus the distribution of the number of $j$ neighbors of any $i$ vertex is close to a Poisson($n_j p_{i,j}$). In the reversed case, the distribution of the number of $j$ vertices of whom $u$ is a neighbor is distributed according to a Poisson($n_j p_{j,i})$.
Altogether this gives us the generating functions:

    \begin{eqnarray}
G_{O_i} \left( \Vec{z} \right) &=& \prod_{1 \leq j \leq r} G_{O_{i,j}} \left( z_j \right) = \prod_{1 \leq j \leq r} G_{Pois\left(\lambda=p_j n_j\right)} \left( z_j \right) = \prod_{1 \leq j \leq r} \exp \left( p_{i,j} n_j \left(z_j - 1\right) \right) = \nonumber\\
&=& \exp \left( \sum_{1 \leq j \leq r} p_{i,j} n_j \left(z_j - 1\right) \right) \ ,
\end{eqnarray}
and similarly:
\begin{equation}
    \begin{aligned}
G_{I_i} \left( \Vec{z} \right) = \exp \left( \sum_{1 \leq j \leq r} p_{j, i} n_j \left(z_j - 1\right) \right) \ .
    \end{aligned} 
\end{equation}

Furthermore, since the in / out degrees are independent, we have $G_{B\Vec{I}} = G_{\Vec{I}}$.
Therefore: 

\begin{equation}
    \begin{aligned}
G_{\Vec{I}} \left(\Vec{z}\right) = G_{B\Vec{I}} \left(\Vec{z}\right) = \exp \left( P^T diag\left( \Vec{n} \right) \left(\Vec{z}-\Vec{1}\right) \right) = \exp \left( M^T \left(\Vec{z}-\Vec{1}\right) \right) \ .
    \end{aligned}
\end{equation}
From here, we can apply our main theorem to conclude that at the end of the epidemic, the fractions of the different types of the infected population are $\Vec{u}$ where 
\begin{equation}
\Vec{u} = \exp\left(P^T diag\left(\vec{n}\right) \left(\Vec{z}-1\right)\right) \ ,
\end{equation}
similar to equation (2.2) of \cite{original}.

\subsection{Graphs with Independent Degree Distributions}

In general, if the in-out degree distributions to each type are independent, then the generating function is a product of generating functions:
\begin{equation}
    G_{I_j} \left(\Vec{z}\right) =
            \sum_{\alpha_1,...,\alpha_r} p_{\alpha_1, ..., \alpha_r} z_1^{\alpha_1} ... z_r^{\alpha_r} =
            \sum_{\alpha_1,...,\alpha_r} p_{\alpha_1} ...p_{\alpha_r} z_1^{\alpha_1} ... z_r^{\alpha_r} =
            \prod_{1 \leq i \leq r} \sum_{\alpha_i} p_{\alpha_i} z_i ^ {\alpha_i} =
            \prod_{1 \leq i \leq r} G_{I_{i,j}} \left(z_i\right) \ .
\end{equation}

Thus for instance, if we consider a $3$ type graph whose first type has in-degrees that are distributed according to a Poisson distribution with mean $\lambda$, a binomial distribution with parameters $n, p$, and an exponential distribution with parameter $\kappa$, then the multivariate generating function reads:
\begin{equation}
    G_{I_1} \left(\Vec{z}\right) = 
                G_{Poiss(\lambda)} \left(z_1\right) G_{Bin(n,p)} \left(z_2\right) G_{Exp(\kappa)} \left(z_3\right) = 
                e^{\lambda \left(z_1 - 1\right)} \left(1 + pz_2 - p\right) ^ n \frac{1-e^{-1/k}}{1- z_3 e^{-1/k}} \ .
\end{equation}

\subsection{Poisson Matrix with a General Activity Level Distribution}

Another type of generalisation of the Poisson matrix distribution was introduced in \cite{science}.
In this paper the authors partition the population into 6 age cohorts and model the transfer of a disease between these cohorts by using the multi-type Poisson distribution, with a certain interaction matrix $M$.
Furthermore, they assume each age cohort is in itself partitioned into 3 sub-categories based upon their levels of interaction with others.
This leads  to a multi-type problem with $r=3\times6 = 18$ types, thus requiring to solve an equation in $18$ variables.

An easier approach is to consider each age cohort as a single type with a non-Poisson degree distribution.
Specifically, we assume as in \cite{science} that each age cohort is made up of three activity levels: $50\%$ of each age cohort have a normal activity level, $25\%$ have a higher activity level (doubling both their probability of being infected and the expected number of secondary infections they will cause) and $25\%$ have a lower activity level (decreasing both their susceptibility and their infectiousness to half that of a normal activity level).
Considering each age cohort as a single type, it is easy to see that:

\begin{equation}
    \begin{aligned}
G_{\Vec{I}} \left(\Vec{z}\right) = \frac{1}{4} \exp \left(\frac{1}{2} M^T \left(\Vec{z}-\Vec{1}\right) \right) + \frac{1}{2} \exp \left( M^T \left(\Vec{z}-\Vec{1}\right) \right) + \frac{1}{4} \exp \left( 2 M^T \left(\Vec{z}-\Vec{1}\right) \right) \ .
    \end{aligned}
\end{equation}
Each activity level also increases the out-degrees of the vertices, causing a bias proportional to the activity level.
Therefore:
\begin{equation}
    \begin{aligned}
G_{BI} \left(\Vec{z}\right) &= \left(\frac{1}{\frac{1}{8} + \frac{1}{2} + \frac{2}{4}}\right) \left(\frac{1}{8} \exp \left(\frac{1}{2} M^T \left(\Vec{z}-\Vec{1}\right) \right) + \frac{1}{2} \exp \left( M^T \left(\Vec{z}-\Vec{1}\right) \right) + \frac{2}{4} \exp \left( 2 M^T \left(\Vec{z}-\Vec{1}\right) \right)\right) = \\
    &= \frac{1}{9} \exp \left(\frac{1}{2} M^T \left(\Vec{z}-\Vec{1}\right) \right) + \frac{4}{9} \exp \left( M^T \left(\Vec{z}-\Vec{1}\right) \right) + \frac{4}{9} \exp \left( 2 M^T \left(\Vec{z}-\Vec{1}\right) \right) \ .
    \end{aligned}
\end{equation}

In the case considered by \cite{science}, which was chosen as a simple example that can be easily computed, it is still possible to use the naive approach.
However, in larger settings this is likely to prove more difficult.
On the other hand, even in the more general case where the activity level of an individual of each type is distributed according to some distribution $\rho(x)$, one can still use our method to solve for the end of the disease without requiring an optimization over a large number of inputs using the identities:
\begin{equation}
    \begin{aligned}
G_{I} \left(\Vec{z}\right) &= \int_x \rho(x) \exp \left( x M^T \left(\Vec{z}-\Vec{1}\right) \right) dx \ ,
    \end{aligned}
    \label{eq:WP_GI}
\end{equation}
and 
\begin{equation}
    \begin{aligned}
G_{BI} \left(\Vec{z}\right) &= \frac{\int_x x \rho(x) \exp \left( x M^T \left(\Vec{z}-\Vec{1}\right) \right) dx}{\int_x x \rho(x) dx} \ .
    \end{aligned}
    \label{eq:WP_GBI}
\end{equation}
Such graphs are sufficiently independent, since the only dependence between the in-degrees and the out-degrees is through the activity levels which are linearly proportional to any single in / out degree.

\section{Epidemic Spread}

In this section we will use our results to improve and generalise the model proposed by Britton et al. \cite{science} of an epidemic whose spread can be temporarily slowed down.
We will first provide a brief description of this model and the numerical methods used in \cite{science}.

We will then simulate a simpler example, showcasing some of the tools we developed over the last few sections.
Next, we will use these tools to reproduce the results of Britton et al. more efficiently and for more general degree distributions.
Finally, we will compare the results of this model with other models when extended to the the multi-type setting.

\subsection{A Multi-Type Model}

The multi-type model of \cite{science} received much attention and
in this subsection we will explore its subtleties.
A standard analysis of the spread of an epidemic in a single-type graph might follow one of two approaches.
Either the model attempts to find the point at which an uncontrolled epidemic will end its spread by finding the expected size of the GCC or the large out-components of the epidemic's graph \cite{He}, or it attempts to find the herd immunity point at which $R_{\text{eff}} = 1$ \cite{herd1, herd3}.
As was shown in \cite{corona3rd}, finding the point at which $R_{\text{eff}} = 1$ is akin to the outcome of an epidemic with limited intervention.

The model proposed in \cite{science} generalizes this approach to multi-type graphs.
It is assumed that some temporary counter-measures are taken that reduce the infectiousness by a multiplicative factor of $0<\alpha\leq1$.
However, when the counter-measures are lifted, the infectiousness returns to its previous state.
The goal of this type of counter-measures is to reach herd immunity with only a small infected population.

If the value of $\alpha$ were to be set to $1$, the epidemic would go on as usual and while we would reach herd immunity in the first outbreak, we would still see a great many cases even after $R_{\text{eff}} < 1$.
If the value of $\alpha$ were set extremely low, then during the first lockdown we would see no outbreak, but that would not prepare us for the lifting of the restrictions.

By choosing the value of $\alpha$ somewhere in the middle, one can reduce the total infected population, since the first wave will be greatly diminished by the counter-measures and the second wave will be diminished by the resistant population from the first wave.
For single type graphs, it can be easily shown that the optimal value of $\alpha$ is such that after lifting the restrictions, the value of $R_{\text{eff}}$ is exactly $1$ and that the resulting total infected population is equal to the forecast of the "$R_0$ model".
The authors of \cite{science} assume this also holds for the multi-type scenario and search for this value of $\alpha$ in several different scenarios.

\subsection{Large Out-Components of Multi-Type Epidemic Spread}

In this section we will model the spread of an epidemic in a multi-type society.
Details of the computations are provided in Appendix A.
We will first apply the tools to a toy model with $r=2$ types and then we will work with the $r=6$ age cohorts considered in \cite{science}, and consider different activity level distributions.

In order to find a solution to equations (\ref{eq:root}), we need to solve a system of nonlinear equations.
While in general this may be a difficult problem, in our case we found that continuous optimization techniques tend to converge to the solution.
In particular, we use a variant of the gradient descent algorithm detailed in appendix A.
In Figure \ref{fig:R0_opt} we test our gradient descent on a set of multi-type graphs.
We set the basic interaction matrix:
\begin{equation}
M = \begin{pmatrix} 1 & 0.15 \\ 0.05 & 1 \end{pmatrix} \ ,
\label{IM}
\end{equation}
and the distribution of activation levels from \cite{science}, and multiplied the interaction matrix by a different factor each time to obtain different values of the reproductive index per equation (\ref{eq:R0}).
This figure shows both the accuracy of our gradient descent, and the threshold phenomena where $R_{\text{eff}} \leq 1$ corresponds to $\Vec{u} = \Vec{1}$ being the only root (meaning that there are no large out-components) and $R_{\text{eff}} > 1$ corresponds to a second root moving further from $\Vec{1}$ the higher $R_{\text{eff}}$ becomes (which translates to larger out-components).

\begin{figure}[H]
\centering
\includegraphics[width=0.7\columnwidth]{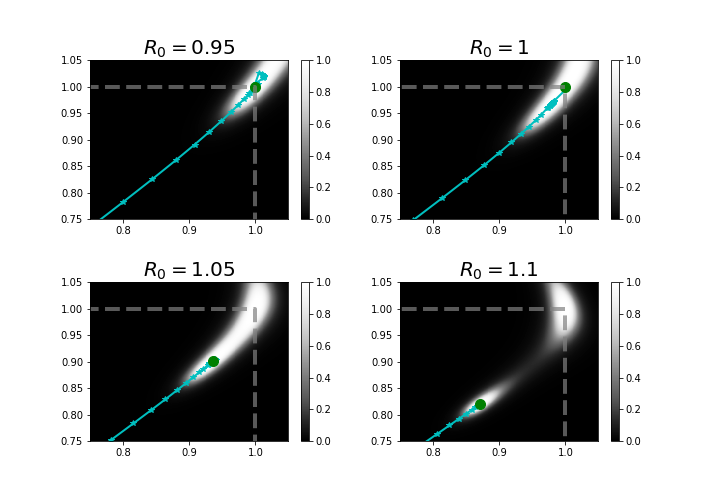} 
\caption{
We use the gradient descent algorithm detailed in appendix A to search for the solution to equation (\ref{eq:root})
with the interaction matrix (\ref{IM}).
We plot a heat-map of $\frac{1}{1+a\left\Vert \Vec{u} - G_{B\Vec{I}} \left( \Vec{u} \right) \right\Vert ^ b}$ where $a,b$ were chosen so that that the zeros of $\Vec{u} - G_{B\Vec{I}} \left( \Vec{u} \right)$ would be more visible.
The axes correspond to the coordinates of $\Vec{u}$, where the closer $\Vec{u}$ is to $\Vec{1}$, the smaller the large out-component is.
The cyan stars indicate the path taken by the gradient descent and the green disk indicates the output of the optimisation.
}
\label{fig:R0_opt}
\end{figure}

In Figure \ref{fig:6_type_model}, we compare the total infected for these age cohorts with several different activity level distributions.
Using equations (\ref{eq:root}),(\ref{eq:WP_GI}),(\ref{eq:WP_GBI}), we compute the expected fractional size of the large out-components of an infectious disease using the population sizes and the interaction matrix from \cite{science}, with varying activity level distributions (the dashed lines).

For each pair of $R_0$ (initial value of $R_{\text{eff}}$) and activity distribution, we performed a population based simulation as detailed in Appendix \ref{appendix:simulation}.
In these simulations we produced a large population of individuals divided amongst a set number of types, with in-out degrees drawn from the appropriate degree distributions (each pair of types have a basic interaction rate which is then multiplied by the activity level of the infecting or infected individual and normalised to obtain the correct $R_0$).

In each simulation we define some small fraction of the population to be infected and the rest to be susceptible.
We then propagate the infection and measure the fraction of the population reached by the disease.
As can be seen in Figure \ref{fig:6_type_model}, the populations based simulations agree almost exactly with the predictions of the multivariate generating function model.
We performed these computations for the case where the activity level distribution is: uniform over the segment $(0,1)$, a half-Gaussian, a Gamma distribution with parameter $\kappa=0.1$ and a generalised Pareto distribution with parameter $\xi=0.25$. In each case we fixed the value of $R_0$ to several values in the range $(1, 5)$ (normalising the interaction matrix to produce any specific $R_0$).
As we can see, the generalised Pareto and the Gamma distribution which are characterised by their higher variance produce a significantly smaller largest out-component than the uniform and half-Gaussian activity level distributions. This fits with the previous results of \cite{corona3rd, herd1, herd3, He, science}.

\begin{figure}[H]
\centering
\includegraphics[width=0.7\columnwidth]{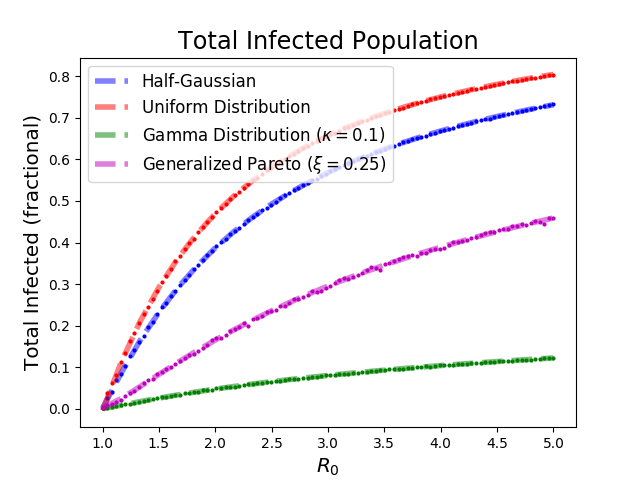} 
\caption{
The expected fractional size of the large out-components of an infectious disease using the population sizes and the interaction matrix from \cite{science}, with varying activity level distributions (the dashed lines) compared to population based simulations for each of these parameter sets (the dots). 
The generalised Pareto and the Gamma distribution which are characterised by their higher variance produce a significantly smaller largest out-component than the uniform and half-Gaussian activity level distributions. 
The results of the population based simulations fit the predictions of the model.
}
\label{fig:6_type_model}
\end{figure}

\section{Limitations}

In this section we will explore some of the limitations of our model.
Its most prominent drawback is that it cannot predict the effects of correlations between the edges of neighboring vertices.
For instance, a contact tracing analysis performed on the outbreak of COVID-19 in Italy showed that the majority of infections were between members of the same household \cite{lavezzo2020suppression}.
Additionally, many physical models (such as the Ising model for instance) only permit interaction between objects that are close to each other (in some geometrical sense).
It is easy to see that information spread through such a graph would not reach exponentially many vertices within a short amount of time.

Our model does not cover this type of graphs.
A concrete example is considered for single-type graphs in Section 6 of \cite{corona3rd}, where it is shown that a population comprised of close knitted families with weak interactions between families would result in a significantly worse final outcome of an epidemic than the one predicted by models that do not take these local interactions into account.
This example holds for the multi-type regime as well.

\subsection{Sufficient Independence}
\label{subsec:lim_suff_ind}

The second limitation of our model is that it works only for graphs with sufficient independence.
To help clarify this notion and show its necessity we will present a simple example without it.
Consider for instance a graph with $r=2$ types, each representing $\pi_1 = \pi_2 = \frac{1}{2}$ of the population.
Let every vertex of type $1$ has exactly $2$ outgoing edges to vertices of type $2$ and no incoming edges.
Finally, let half the vertices of type $2$ have an in-degree of $(4, 0)$ (that is two incoming edges from type $1$ and no incoming edges from type $2$) and an out-degree of $(0, 0)$, and let the rest have an in-degree of $(0, 2)$ and a similar out-degree.
In Figure \ref{fig:dependent_graph} we show a visual representation of such a graph with a population of $N=16$.

In this simple example, it is clear to see that one could simply partition the second type into $2$ very different types and continue with a simple analysis.
However, such partitions are not always possible.
For instance, if instead of having a discrete distribution with only two options for the behaviour of type $2$ vertices, we had set a negative correlation between the first in-degree and the second out-degree, that would have already altered the sufficient independence assumption.

Continuing with the simple example, it is now unclear what the biased distribution of $I_{2,2}$ is.
If we were to give each vertex a weight proportional to its total in-degree, then the vertices with $4$ incoming edges from type $1$ would dominate this distribution and we would estimate that an infected individual of type $2$ is expected, on average, to infect $\frac{2}{3}$ others.
Therefore, given only this piece of information, we would expect that the probability of widespread infection within type $2$ vertices to be very small.
However, this is clearly not true.
In this example, type $1$ vertices have an in-degree of $0$, so they cannot be infected and play no role in the spread of an epidemic.
Some type $2$ vertices take all of their in-degrees from type $1$ vertices and have no out-degrees, so they also take no substantial part in the spread of the infection.
But the rest of the type $2$ vertices act as a small independent system with $R_0 = 2$, and are very likely to experience an outbreak.

In order to create a more general model, one could define $B_i O_{j,k}$ and $B_i I_{j,k}$ to be the distributions of in and out degrees from type $j$ to type $k$, weighted by their $i$th out / in degree.
This would give a slightly more complex additive structure of the form
 \begin{equation}
 B_i N_{j,k} (d+1) = \sum_\ell \sum_{1}^{B_i O_{j, \ell}} B_j N_{\ell, k} (d) \ .
 \end{equation}
While this additive structure could also be used to recreate many of our results in the more general setting, it would require us to produce as input $r^3$ distributions (whereas our construction required an input of only $r^2$ distributions which we set to have a very strong structure anyways).

\begin{figure}[H]
\centering
\includegraphics[width=0.7\columnwidth]{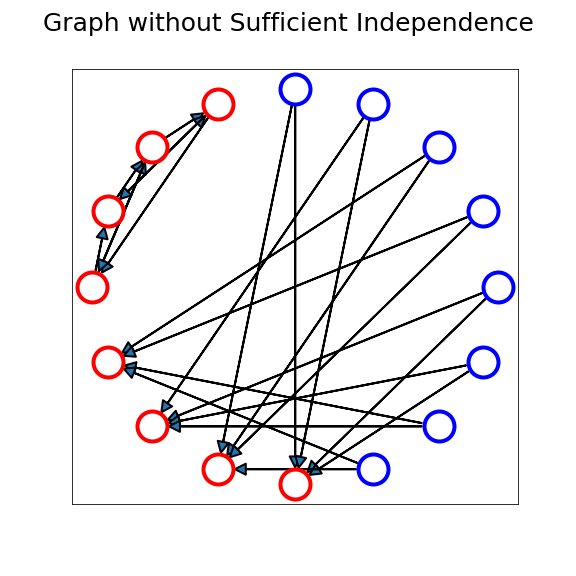} 
\caption{
An example of a graph without sufficient independence. This graph contains $r=2$ types of vertices (the red and blue circles). Each vertex of type $1$ ({\color{blue} blue}) has in-degree $(0, 0)$ and out-degree $(0, 2)$. Half of the vertices of type $2$ ({\color{red} red}) have in-degree $(4, 0)$ and out-degree $(0,0)$ and the other half have in-degree $(0, 2)$ and out-degree $(0, 2)$.
}
\label{fig:dependent_graph}
\end{figure}

\subsection{Unbounded Second Moments}
\label{subsec:stds}

Another limitation of our model is that we require the second moments of the degree distribution to be bounded.
This limitation (which also holds for most single type models) has been explored in a number of previous cases.

We will focus on the example of sexually transmitted diseases \cite{N2, handcock2004likelihood}.
These authors consider the spread of a disease between two populations (types in our nomenclature), where edges are only between the two types and the degrees are assumed to be distributed according to a power-law distributions (with power $\alpha$ commonly assumed to be around $3.2$ \cite{liljeros2001web}).
Furthermore, since such a disease is often modeled by an undirected graph, the biased in/out distributions in our model should be:
\begin{equation}
    \begin{aligned}
        &&\Pr \left[ BO_{i,j} = x \right] \propto x \Pr \left[ O_{i,j} = x \right] \ , \\
        &&\Pr \left[ BI_{i,j} = x \right] \propto x \Pr \left[ I_{i,j} = x \right] \ .
    \end{aligned} 
\end{equation}
Therefore, both $BO_{i,j}$ and $BI_{i,j}$ are distributed according to a power-law with power $\alpha-1$.
Furthermore, since a vertex of any type is only be connected to vertices of one other type (in this model), these distributions are sufficiently independent.
The authors of \cite{N2} show that when the power-law is sub-cubic (i.e. the tail of the degree distribution disappears slower than $\Pr\left[x\right] \propto x^{-3}$), no amount of prevention would stop the spread of such a disease.

In our formulation this is expected when one notices that the second moments of $I_{i,j}$ and $O_{i,j}$ are unbounded.
For instance, if the power-law $\alpha$ of $O_{i,j}$ were less than $3$, then the power-law $\alpha-1$ of $BO_{i,j}$ would be less than $2$, and the average value of $BO_{i,j}$ would be unbounded (which would result in an infinite $R_0$ if our model were relevant to this scenario).
Stricter power-laws (i.e. $\alpha > 3$) would cause $BO_{i,j}$ to have a bounded expectation, resulting in bounded eigenvalues and a bounded value for $R_0$, which fits the results of previous models which show that in these cases the disease can be controlled with counter-measures \cite{N2}.

\section{Discussion}

Previous results have modeled the spread of information in graphs and diseases in populations for either single-type graphs with general degree distributions \cite{PRE_GCC, He, herd1, herd3, herd4, herd5, corona1st, corona3rd}, or for multi-type graphs limited to Poisson \cite{Minzer_2021, original, book} or Poisson-like \cite{science} degree distributions.
In this work we derived a method for computing the size of the large out-component of multi-type graphs with a more complex degree distribution and presented an example of its application to the case of epidemiological research.

In \cite{corona3rd}, the authors compare two types of models for the spread of a disease.
The first is the $R_0$ type model which estimates the percentage of the population infected before herd immunity (defined as the point where $R_{\text{eff}} = 1$).
This model is a good estimate for the percentage of the population infected when the spread of the disease is mitigated by reactive countermeasures that limit the peak infected population.

However, if no countermeasures are taken whatsoever, then at the precise moment when $R_0$ reaches $1$ there will be many infected individuals.
While new infection chains are not expected to grow, those already infected will continue to infect others resulting in a slowly diminishing after-burn effect which can almost double the total infected population.

The second type of models is the GCC type model, which takes into account the after-burn effect and provides a good estimate for the total infected population when no countermeasures are taken.
While both models can be resolved for single-type graphs using the methods shown in \cite{corona3rd}, and for Poisson multi-type graphs as shown in \cite{science}, our solution for the general multi-type scenario resolves only the GCC type model.
We expect that methods similar to those shown in \cite{science} can be used to generalise our results to solve this class of problems as well and leave it for future work.

Another interesting problem is the role of $R_0$:
In our analysis we showed a method by which the concept of a reproductive index (in the sense of the rate by which a disease expands as a function of its generation) can be generalised to classes of random graphs.
Furthermore, we gave a formula for $R_0$ and showed that for Poisson multi-type graphs it is equivalent to the formula given by \cite{Minzer_2021}. 

However, in single-type graphs, the same $R_0$ also plays another role in the study of diseases,
it is a predictor of whether an outbreak will occur or not (there should be an outbreak if and only if $R_0 > 1$).
It can be easily shown that with our definition of $R_0$ for multi-type graphs,  $R_0 > 1$ is a necessary condition for an outbreak.
For instance, we can use the convexity of $G_{BI}$ to obtain the following inequality:

\begin{eqnarray}
    \sum_{\lambda\in eig\left(M^T_{BO}\right)} \vert a_\lambda \vert ^ 2 &= &
    \left\langle \Vec{1} - \Vec{u}, \Vec{1} - \Vec{u} \rangle = \langle \Vec{1} - \Vec{u}, \Vec{1} - G_{BI} \left(\Vec{u}\right) \right\rangle \leq
    \left\langle \Vec{1} - \Vec{u}, \Vec{1} - G_{BI} \left(\Vec{1}\right) + M_{BI} \left(\Vec{1} - \Vec{u}\right) \right\rangle = \nonumber\\
   &=& \left\langle \Vec{1} -  \Vec{u}, M^T_{BO} \left(\Vec{1} - \Vec{u}\right) \right\rangle = 
    \sum_{\lambda\in eig\left(M^T_{BO}\right)} \vert a_\lambda \vert ^ 2 \lambda \ ,
\end{eqnarray}
where
\begin{equation}
\Vec{1} -  \Vec{u} = \sum_{\lambda\in eig\left(M^T_{BO}\right)} a_\lambda \Vec{v_\lambda} 
\end{equation}
is the spectral decomposition of $\Vec{1} -  \Vec{u}$ to the eigenvalues and eigenvectors of $M_{BO}$.
For this inequality to hold, we must have $Re\left(\lambda\right) > 1$ for at least one of the eigenvalues of $M_{BO}$.
However, it still remains to be proven that it is impossible for $R_0$ to be larger than $1$ without having an outbreak.
So far, to the best of our knowledge, this has been proven only for the Poisson distribution in \cite{Minzer_2021}.

\section*{Acknowledgements}

The work is supported in part by the Israeli Science Foundation Center
of Excellence, the European Research Council (ERC) under the European Union’s Horizon 2020 research and innovation program (Grant agreement No. 835152), 
 BSF 2016414, 
the IBM Einstein Fellowship and John and Maureen Hendricks Charitable Foundation
at the Institute for Advanced Study in Princeton.

\bibliography{Corona5th}

\begin{thebibliography}{31}
\providecommand{\natexlab}[1]{#1}
\providecommand{\url}[1]{\texttt{#1}}
\expandafter\ifx\csname urlstyle\endcsname\relax
  \providecommand{\doi}[1]{doi: #1}\else
  \providecommand{\doi}{doi: \begingroup \urlstyle{rm}\Url}\fi

\bibitem[Bollob{\'a}s and B{\'e}la(2001)]{Bol}
B{\'e}la Bollob{\'a}s and Bollob{\'a}s B{\'e}la.
\newblock \emph{Random graphs}.
\newblock Number~73. Cambridge university press, 2001.

\bibitem[Janson et~al.(2011)Janson, Luczak, and Rucinski]{Ran}
Svante Janson, Tomasz Luczak, and Andrzej Rucinski.
\newblock \emph{Random graphs}, volume~45.
\newblock John Wiley \& Sons, 2011.

\bibitem[Newman(2010)]{New}
Mark~EJ Newman.
\newblock \emph{Networks: an introduction}.
\newblock Oxford Press, 2010.

\bibitem[Rock et~al.(2014)Rock, Brand, Moir, and Keeling]{review1}
Kat Rock, Sam Brand, Jo~Moir, and Matt~J Keeling.
\newblock Dynamics of infectious diseases.
\newblock \emph{Reports on Progress in Physics}, 77\penalty0 (2):\penalty0
  026602, 2014.

\bibitem[Pastor-Satorras et~al.(2015)Pastor-Satorras, Castellano, Van~Mieghem,
  and Vespignani]{review2}
Romualdo Pastor-Satorras, Claudio Castellano, Piet Van~Mieghem, and Alessandro
  Vespignani.
\newblock Epidemic processes in complex networks.
\newblock \emph{Reviews of modern physics}, 87\penalty0 (3):\penalty0 925,
  2015.

\bibitem[Kleinberg et~al.(1999)Kleinberg, Kumar, Raghavan, Rajagopalan, and
  Tomkins]{web}
Jon~M Kleinberg, Ravi Kumar, Prabhakar Raghavan, Sridhar Rajagopalan, and
  Andrew~S Tomkins.
\newblock The web as a graph: Measurements, models, and methods.
\newblock In \emph{International Computing and Combinatorics Conference}, pages
  1--17. Springer, 1999.

\bibitem[Newman et~al.(2001)Newman, Strogatz, and Watts]{PRE_GCC}
Mark~EJ Newman, Steven~H Strogatz, and Duncan~J Watts.
\newblock Random graphs with arbitrary degree distributions and their
  applications.
\newblock \emph{Physical review E}, 64\penalty0 (2):\penalty0 026118, 2001.

\bibitem[Newman(2002)]{N2}
Mark~EJ Newman.
\newblock Spread of epidemic disease on networks.
\newblock \emph{Physical review E}, 66\penalty0 (1):\penalty0 016128, 2002.

\bibitem[Oz et~al.(2021{\natexlab{a}})Oz, Rubinstein, and Safra]{corona1st}
Yaron Oz, Ittai Rubinstein, and Muli Safra.
\newblock Superspreaders and high variance infectious diseases.
\newblock \emph{Journal of Statistical Mechanics: Theory and Experiment},
  2021\penalty0 (3):\penalty0 033417, 2021{\natexlab{a}}.

\bibitem[Oz et~al.(2021{\natexlab{b}})Oz, Rubinstein, and Safra]{corona3rd}
Yaron Oz, Ittai Rubinstein, and Muli Safra.
\newblock Heterogeneity and superspreading effect on herd immunity.
\newblock \emph{Journal of Statistical Mechanics: Theory and Experiment},
  2021\penalty0 (3):\penalty0 033405, 2021{\natexlab{b}}.

\bibitem[Tkachenko(2020)]{herd1}
Alexei~V Tkachenko.
\newblock Persistent heterogeneity not short-term overdispersion determines
  herd immunity to covid-19.
\newblock Technical report, Brookhaven National Lab.(BNL), Upton, NY (United
  States), 2020.

\bibitem[Aguas et~al.(2021)Aguas, King, Gon{\c{c}}alves, Ferreira, and
  Gomes]{herd3}
Ricardo Aguas, Jessica~G King, Guilherme Gon{\c{c}}alves, Marcelo~U Ferreira,
  and M~Gabriela~M Gomes.
\newblock Herd immunity thresholds for sars-cov-2 estimated from unfolding
  epidemics.
\newblock \emph{MedRxiv}, pages 2020--07, 2021.

\bibitem[Kawagoe et~al.(2021)Kawagoe, Rychnovsky, Chang, Huber, Li, Miller,
  Pnini, Veytsman, and Yllanes]{herd4}
K~Kawagoe, M~Rychnovsky, S~Chang, G~Huber, LM~Li, J~Miller, R~Pnini,
  B~Veytsman, and D~Yllanes.
\newblock Epidemic dynamics in inhomogeneous populations and the role of
  superspreaders.
\newblock \emph{Physical Review Research}, 3\penalty0 (3):\penalty0 033283,
  2021.

\bibitem[Nielsen et~al.(2021)Nielsen, Simonsen, and Sneppen]{herd5}
Bjarke~Frost Nielsen, Lone Simonsen, and Kim Sneppen.
\newblock Covid-19 superspreading suggests mitigation by social network
  modulation.
\newblock \emph{Physical Review Letters}, 126\penalty0 (11):\penalty0 118301,
  2021.

\bibitem[H{\'e}bert-Dufresne et~al.(2020)H{\'e}bert-Dufresne, Althouse,
  Scarpino, and Allard]{He}
Laurent H{\'e}bert-Dufresne, Benjamin~M Althouse, Samuel~V Scarpino, and
  Antoine Allard.
\newblock Beyond r 0: heterogeneity in secondary infections and probabilistic
  epidemic forecasting.
\newblock \emph{Journal of the Royal Society Interface}, 17\penalty0
  (172):\penalty0 20200393, 2020.

\bibitem[Karrer et~al.(2014)Karrer, Newman, and Zdeborov{\'a}]{N1}
Brian Karrer, Mark~EJ Newman, and Lenka Zdeborov{\'a}.
\newblock Percolation on sparse networks.
\newblock \emph{Physical review letters}, 113\penalty0 (20):\penalty0 208702,
  2014.

\bibitem[Althouse et~al.(2020)Althouse, Wenger, Miller, Scarpino, Allard,
  H{\'e}bert-Dufresne, and Hu]{althouse2020superspreading}
Benjamin~M Althouse, Edward~A Wenger, Joel~C Miller, Samuel~V Scarpino, Antoine
  Allard, Laurent H{\'e}bert-Dufresne, and Hao Hu.
\newblock Superspreading events in the transmission dynamics of sars-cov-2:
  Opportunities for interventions and control.
\newblock \emph{PLoS biology}, 18\penalty0 (11):\penalty0 e3000897, 2020.

\bibitem[Wallinga et~al.(2006)Wallinga, Teunis, and
  Kretzschmar]{wallinga2006using}
Jacco Wallinga, Peter Teunis, and Mirjam Kretzschmar.
\newblock Using data on social contacts to estimate age-specific transmission
  parameters for respiratory-spread infectious agents.
\newblock \emph{American journal of epidemiology}, 164\penalty0 (10):\penalty0
  936--944, 2006.

\bibitem[Ball and Clancy(1995)]{original}
Frank Ball and Damian Clancy.
\newblock The final outcome of an epidemic model with several different types
  of infective in a large population.
\newblock \emph{Journal of applied probability}, 32\penalty0 (3):\penalty0
  579--590, 1995.

\bibitem[Andersson and Britton(2012)]{book}
Hakan Andersson and Tom Britton.
\newblock \emph{Stochastic epidemic models and their statistical analysis},
  volume 151.
\newblock Springer Science \& Business Media, 2012.

\bibitem[Minzer et~al.(2021)Minzer, Oz, Safra, and Wainstain]{Minzer_2021}
Dor Minzer, Yaron Oz, Muli Safra, and Lior Wainstain.
\newblock Pandemic spread in communities via random graphs.
\newblock \emph{Journal of Statistical Mechanics: Theory and Experiment},
  2021\penalty0 (11):\penalty0 113501, 2021.

\bibitem[Britton et~al.(2020)Britton, Ball, and Trapman]{science}
Tom Britton, Frank Ball, and Pieter Trapman.
\newblock A mathematical model reveals the influence of population
  heterogeneity on herd immunity to sars-cov-2.
\newblock \emph{Science}, 369\penalty0 (6505):\penalty0 846--849, 2020.

\bibitem[Lloyd-Smith et~al.(2005)Lloyd-Smith, Schreiber, Kopp, and Getz]{herd6}
James~O Lloyd-Smith, Sebastian~J Schreiber, P~Ekkehard Kopp, and Wayne~M Getz.
\newblock Superspreading and the effect of individual variation on disease
  emergence.
\newblock \emph{Nature}, 438\penalty0 (7066):\penalty0 355--359, 2005.

\bibitem[Miller et~al.(2020)Miller, Martin, Harel, Tirosh, Kustin, Meir, Sorek,
  Gefen-Halevi, Amit, Vorontsov, et~al.]{Adi}
Danielle Miller, Michael~A Martin, Noam Harel, Omer Tirosh, Talia Kustin, Moran
  Meir, Nadav Sorek, Shiraz Gefen-Halevi, Sharon Amit, Olesya Vorontsov, et~al.
\newblock Full genome viral sequences inform patterns of sars-cov-2 spread into
  and within israel.
\newblock \emph{Nature communications}, 11\penalty0 (1):\penalty0 1--10, 2020.

\bibitem[Zhao et~al.(2020)Zhao, Musa, Lin, Ran, Yang, Wang, Lou, Yang, Gao, He,
  et~al.]{R01}
Shi Zhao, Salihu~S Musa, Qianying Lin, Jinjun Ran, Guangpu Yang, Weiming Wang,
  Yijun Lou, Lin Yang, Daozhou Gao, Daihai He, et~al.
\newblock Estimating the unreported number of novel coronavirus (2019-ncov)
  cases in china in the first half of january 2020: a data-driven modelling
  analysis of the early outbreak.
\newblock \emph{Journal of clinical medicine}, 9\penalty0 (2):\penalty0 388,
  2020.

\bibitem[Wu et~al.(2020)Wu, Hao, Lau, Wong, Leung, Wu, Cowling, and Leung]{R02}
Peng Wu, Xinxin Hao, Eric~HY Lau, Jessica~Y Wong, Kathy~SM Leung, Joseph~T Wu,
  Benjamin~J Cowling, and Gabriel~M Leung.
\newblock Real-time tentative assessment of the epidemiological characteristics
  of novel coronavirus infections in wuhan, china, as at 22 january 2020.
\newblock \emph{Eurosurveillance}, 25\penalty0 (3):\penalty0 2000044, 2020.

\bibitem[Li et~al.(2020)Li, Guan, Wu, Wang, Zhou, Tong, Ren, Leung, Lau, Wong,
  et~al.]{R03}
Qun Li, Xuhua Guan, Peng Wu, Xiaoye Wang, Lei Zhou, Yeqing Tong, Ruiqi Ren,
  Kathy~SM Leung, Eric~HY Lau, Jessica~Y Wong, et~al.
\newblock Early transmission dynamics in wuhan, china, of novel
  coronavirus--infected pneumonia.
\newblock \emph{New England journal of medicine}, 2020.

\bibitem[Riou and Althaus(2020)]{R04}
Julien Riou and Christian~L Althaus.
\newblock Pattern of early human-to-human transmission of wuhan 2019 novel
  coronavirus (2019-ncov), december 2019 to january 2020.
\newblock \emph{Eurosurveillance}, 25\penalty0 (4):\penalty0 2000058, 2020.

\bibitem[Lavezzo et~al.(2020)Lavezzo, Franchin, Ciavarella, Cuomo-Dannenburg,
  Barzon, Del~Vecchio, Rossi, Manganelli, Loregian, Navarin,
  et~al.]{lavezzo2020suppression}
Enrico Lavezzo, Elisa Franchin, Constanze Ciavarella, Gina Cuomo-Dannenburg,
  Luisa Barzon, Claudia Del~Vecchio, Lucia Rossi, Riccardo Manganelli, Arianna
  Loregian, Nicol{\`o} Navarin, et~al.
\newblock Suppression of covid-19 outbreak in the municipality of vo, italy.
\newblock \emph{MedRxiv}, 2020.

\bibitem[Handcock and Jones(2004)]{handcock2004likelihood}
Mark~S Handcock and James~Holland Jones.
\newblock Likelihood-based inference for stochastic models of sexual network
  formation.
\newblock \emph{Theoretical population biology}, 65\penalty0 (4):\penalty0
  413--422, 2004.

\bibitem[Liljeros et~al.(2001)Liljeros, Edling, Amaral, Stanley, and
  {\AA}berg]{liljeros2001web}
Fredrik Liljeros, Christofer~R Edling, Luis A~Nunes Amaral, H~Eugene Stanley,
  and Yvonne {\AA}berg.
\newblock The web of human sexual contacts.
\newblock \emph{Nature}, 411\penalty0 (6840):\penalty0 907--908, 2001.

\end{thebibliography}

\appendix

\section{Methods}

In this appendix we detail the numerical algorithms used to produce the results for Figures \ref{fig:R0_opt} and \ref{fig:6_type_model}.
These included a continuous optimization technique used for finding the solutions to equation (\ref{eq:root}), and a population based simulation whose results were then compared with the output of the model.

\subsection{Gradient Descent}

Gradient descent is a technique used for minimizing functions.
Its base premise is that by moving against the direction of the gradient of the function, one can most rapidly reduce its value.
We utilized this base premise but changed the heuristic by which the distance travelled along this gradient was chosen.
Our goal in this choice was to advance as far as possible against the gradient, without running the risk of overshooting the root.

Our strategy was to frame our root finding problem as the minimization of a function which could be approximated by a quadratic form near its roots.
We did this by choosing to minimise the function:
\begin{equation}
f(\Vec{u}) = \left\Vert \Vec{u} - G_{B\Vec{I}} \left( \Vec{u} \right) \right\Vert ^ 2 \ .
\end{equation}
Let $\Vec{r}$ be the correct solution and define $\Vec{\varepsilon} = \Vec{u} - \Vec{r}$.
Assuming that the Jacobian matrix $J$ of $G_{B\Vec{I}}$ at the root is not equal to the identity, our first order approximation for $f$ when $\varepsilon$ is sufficiently small would be:
\begin{equation}
    \begin{aligned}
    f(\Vec{u}) &=
        \left\Vert \Vec{r}  + \Vec{\varepsilon} - G_{B\Vec{I}} \left( \Vec{r}  + \Vec{\varepsilon} \right) \right\Vert ^ 2 \approx
        \left\Vert \Vec{r}  + \Vec{\varepsilon} - G_{B\Vec{I}} \left( \Vec{r} \right)  - J \Vec{\varepsilon} \right\Vert ^ 2 = \Vec{\varepsilon}^T (I-J)^T (I-J) \Vec{\varepsilon} \ .
    \end{aligned}
    \label{eq:SQR_form}
\end{equation}

We followed these steps to determine how far we went along the gradient at each iteration of the descent
Let $\Vec{u}_n$ be the $n^{th}$ step of the gradient descent.
We define the direction of the step:
\begin{equation}
\hat{d}_n = \frac{\nabla f(u_n)}{\left\Vert \nabla f(u_n)\right\Vert} \ .
\end{equation}
We then consider the simpler function $g_n (x) = f(\Vec{u}_n + x \hat{d}_n)$.
Assuming that $f$ is indeed close to a quadratic form at $u_n$, $g_n$ will also be close to a quadratic form $g_n (x) \approx a x^2 + bx + c$ (for some $a,b,c$).
We use the first and second order derivatives of $g$ at $0$ to estimate its minimum $x_{n} = \frac{-g_n'(0)}{g_n''(0)}$.
However, to be on the safe side we cut the length of the step in half.
Putting it all together, this gives us the following formula for the each step of the gradient descent:
\begin{equation}
    \begin{aligned}
    \Vec{u}_{n+1} = \Vec{u}_n + \frac{1}{2} x_n \hat{d}_n \ .
    \end{aligned}
\end{equation}

\subsection{Population Based Simulation}
\label{appendix:simulation}

We performed population based simulations similar to the ones in \cite{corona3rd}.
In these simulations we modeled a population of size $N=1E6$ individuals with a certain activity level specified for each experiment.
These were then normalised to give the desired $R_0$ using equation (\ref{eq:R0}).
Each individual also carried a state which could be either SUSCEPTIBLE, INFECTED or RECOVERED, and a cohort corresponding to an age group (similar to the age cohorts in \cite{science}).
The percentage of the population in each age cohort and the interaction between age cohorts were set according to the data in \cite{science}.
A small percentage of the population ($100$ individuals chosen at random, weighted according to their activity levels) was then set to INFECTED.

From there we set an iterative process where at each step an INFECTED individual $A$ would set any SUSCEPTIBLE individual $B$ to INFECTED with probability $p = A.infectiousness \times B.susceptibility \times InteractionMatrix[A.cohort, B.cohort]$.
At the end of the iteration we would set $A$ to RECOVERED.
We repeated this process until the entire population was either SUSCEPTIBLE or RECOVERED, and then counted the SUSCEPTIBLE and RECOVERED individuals.

\subsection{Details of the Models}

In all of our models we used the interaction matrix and the relative cohort sizes from \cite{science}.
In particular, these age cohorts which correspond to the age groups 0-5, 6-12, 13-19, 20-39, 40-59 and 60+ were assumed to represent $7.25\%$, $8.66\%$, $11.24\%$, $33.23\%$, $22.67\%$ and $16.95\%$ of the population respectively.

The basic interaction matrix used was: 
\[
\left(\begin{matrix}
2.2257 & 0.4136 & 0.2342 & 0.4539 & 0.2085 & 0.1506 \\
0.4139 & 3.614 & 0.4251 & 0.4587 & 0.2712 & 0.1514 \\
0.2342 & 0.4257 & 2.9514 & 0.6682 & 0.4936 & 0.1972 \\
0.4539 & 0.4592 & 0.6676 & 0.9958 & 0.651 & 0.33 \\
0.2088 & 0.2706 & 0.4942 & 0.6508 & 0.8066 & 0.4341 \\
0.1507 & 0.152 & 0.1968 & 0.3303 & 0.4344 & 0.7136 \\
\end{matrix}\right) \ .
\]

We considered 4 basic distributions of activity levels:

\begin{itemize}
    \item {\em Gamma Distribution} with the PDF $f(x) = \frac{1}{\Gamma(k) \theta^k} x^{k-1} e^{-\frac{x}{\theta}}$, where $k$ was set to $0.1$ (as in \cite{corona3rd}) and $\theta$ was scaled for the desired $R_0$.
    \item {\em Generalised Pareto Distribution} with the PDF $f(x) = \frac{1}{\sigma} \left(1 + \zeta \frac{x-\mu}{\sigma}\right) ^ {-(1/\zeta+1)}$, where $\mu$ and $\zeta$ were set to $0$ and $0.25$ respectively and $\sigma$ was scaled for the desired $R_0$.
    \item {\em Half Normal Distribution} with the PDF $f(x) = \sqrt{\frac{2}{\pi}} \frac{1}{\sigma} \exp\left(-\frac{x^2}{2\sigma}\right)$ (for all $x>0$), where $\sigma$ was scaled for the desired $R_0$.
    \item {\em Uniform Distribution} with the PDF $f(x) = \frac{1}{L}$ for all $x$ in the segment $\left[0,L\right]$ and $0$ otherwise, where $L$ was scaled for the desired $R_0$.
\end{itemize}

\end{document}